\g@addto@macro{\UrlBreaks}{\UrlOrds}
\DeclareFontFamily{U}{MnSymbolC}{}
\DeclareSymbolFont{MnSyC}{U}{MnSymbolC}{m}{n}
\DeclareFontShape{U}{MnSymbolC}{m}{n}{
    <-6>  MnSymbolC5
   <6-7>  MnSymbolC6
   <7-8>  MnSymbolC7
   <8-9>  MnSymbolC8
   <9-10> MnSymbolC9
  <10-12> MnSymbolC10
  <12->   MnSymbolC12%
}{}
\DeclareMathSymbol{\powerset}{\mathord}{MnSyC}{180}
\newcommand\footnoteref[1]{\protected@xdef\@thefnmark{\ref{#1}}\@footnotemark}
\newcommand{\problemtitle}[1]{\gdef\@problemtitle{#1}}
\newcommand{\probleminput}[1]{\gdef\@probleminput{#1}}
\newcommand{\problemquestion}[1]{\gdef\@problemquestion{#1}}
\newcommand{\problemoutput}[1]{\gdef\@problemoutput{#1}}
	\par\addvspace{.5\baselineskip}
	\par\addvspace{.5\baselineskip}
	\par\addvspace{.5\baselineskip}
	\par\addvspace{.5\baselineskip}
\renewcommand{\vec}[1]{\boldsymbol{#1}}
\newenvironment{numberedlemma}[1]
{\innernumberedlemma}
{\endinnernumberedlemma}
\begin{document}

\title{Lexico-minimum Replica Placement in Multitrees}

\author{K. Alex Mills \and R. Chandrasekaran \and Neeraj Mittal}
\institute{Department of Computer Science,	\\University of Texas at Dallas,\\ Richardson, TX, 
	USA}

%

\maketitle

\begin{abstract}
In this work, we consider the problem of placing replicas in a data center or storage area network, represented as a digraph, so as to lexico-minimize a previously proposed reliability measure which minimizes the impact of all failure events in the model in decreasing order of severity. Prior work focuses on the special case in which the digraph is an arborescence. In this work, we consider the broader class of multitrees: digraphs in which the subgraph induced by vertices reachable from a fixed node forms a tree. We parameterize multitrees by their number of ``roots'' (nodes with in-degree zero), and rule out membership in the class of fixed-parameter tractable problems (FPT) by showing that finding optimal replica placements in multitrees with 3 roots is NP-hard. On the positive side, we show that the problem of finding optimal replica placements in the class of \emph{untangled} multitrees is FPT, as parameterized by the replication factor $\rho$ and the number of roots $k$. Our approach combines dynamic programming (DP) with a novel tree decomposition to find an optimal placement of $\rho$ replicas on the leaves of a multitree with $n$ nodes and $k$ roots in $O(n^2\rho^{2k+3})$ time.
\end{abstract}

\begin{keywords}
Reliable replica placement, discrete lexicographic optimization, multitrees, tree decomposition, dynamic programming
\end{keywords}

\section{Introduction}

As data centers become larger, ensuring reliable access to the data they store becomes a greater concern. Each piece of hardware introduces a new point of failure -- the more hardware, the more likely it is that failure will occur. Moreover, to keep large-scale data centers cost-effective, they are typically built using commodity hardware, further increasing the likelihood of a failure event. Ensuring the availability and responsiveness of data center operations in such environments has been a subject of recent interest.

Many availability problems are solved through the use of replication: placing identical copies of data or tasks across multiple machines to ensure the survival of one replica in case of failure. While this approach has been known for decades, researchers have recently begun to cast the specific problem of replica placement as an optimization problem in which the dependencies among failure events are modeled \cite{Korupolu2016, Mills2015}. To date, these approaches have relied on the simplifying assumption that the failure event model is hierarchically arranged. While such models coincide with some real-world systems \cite{Parallels, VMWare}, providing optimal replica placements for more general models remains an interesting problem.

Of special interest is the measurement used to score the reliability of a placement. One standard approach involves assigning to each failure event its likelihood of occurrence. But this approach is subject to the following critiques. First, measurements or estimations of failure probability may themselves be unreliable -- thereby providing an unreliable basis for optimization. Second, even a perfect measurement of failure based on historical behavior cannot account for a failure pattern which has never occurred before, and therefore could not have been measured. In other words: ``past performance is not an indicator of future results''.

In light of these concerns, we have proposed in \cite{Mills2015} a multi-criteria reliability measure which minimizes the impact of failure events in the aggregate. Specifically, we introduce a reliability metric which places failure events into buckets based on their \emph{impact} -- the number of replicas which they cause to become unavailable. We then minimize the number of events in each bucket in decreasing order of impact. As a result, the placements we obtain achieve the minimum number of events which cause all replicas to fail (i.e. the number of events with maximum impact). Subject to this quantity being minimized, we then minimize the number of events which cause all but one replica to fail, followed by minimizing the number of events that cause all but two replicas to fail, and so on.

This goal is achieved by minimizing a vector quantity called the \emph{failure aggregate} in the lexicographic order. Our past work investigates minimizing failure aggregates of replicas placed on the leaves of a tree. For this problem an $O(n + \rho \log \rho)$ algorithm can be achieved, where $n$ is the number of nodes in the tree, and $\rho$ is the number of replicas to be placed \cite{Mills2015}. We have also investigated simultaneously minimizing \emph{multiple} placements on the leaves of a tree \cite{Mills2017}. Our current solution to this problem runs in polynomial time when the \emph{skew} is constant. The skew is defined as the maximum absolute difference in number of replicas placed among all pairs of placements. For a skew of $\delta$, we present an algorithm to place $m$ groups of replicas on the leaves of a tree with $n$ nodes in $\tilde{O}(n\rho^3\delta^3m^\Delta / \Delta!)$ time where $\rho$ is the maximum number of replicas placed among all $m$ groups, and $\Delta = O(\delta^2)$ \cite{Mills2017}.

While some commercially available storage area networks use failure domains modeled by trees \cite{Parallels,VMWare}, extensions to more general failure domain models are an important research goal. In this work, we initiate the parameterized study of the problem of lexico-minimum replica placement in multitrees, as parameterized by the number of its roots. A \emph{multitree} is defined as a directed acyclic graph (DAG) in which, for any fixed vertex $v$, the set of vertices reachable from $v$ forms a tree as an induced subgraph. The \emph{roots} and \emph{leaves} of a multitree are defined as nodes with in-degree zero and out-degree zero respectively. We emphasize the parameter by referring to a multitree with $k$ roots as a $k$-multitree.  Our goal is to place $\rho$ replicas on the leaves of a $k$-multitree so that the failure aggregate is minimized in the lexicographic order. 

We show that lexico-minimum replica placement is NP-hard even in $3$-multitrees, ruling out fixed-parameter tractability for this parameterization. The proof we present relies on the Four Color Theorem \cite{Appel1977} to exploit a disparity in hardness of two well-known problems restricted to cubic planar bridgeless graphs. In such graphs, finding a 3-edge-coloring can be done in polynomial time, while solving \textsc{independent set} remains NP-hard. To circumvent this hardness result, we define \emph{untangled} multitrees, a class of multitrees for which we exhibit membership in FPT. We develop a FPT algorithm based on the tree decomposition approach. Since multitrees are a special case of directed acyclic graphs, standard decomposition approaches such as treewidth \cite{Bodlaender2008}, pathwidth \cite{Andreica2008}, and DAG-width \cite{Berwanger2012} do not apply. Instead, we provide a novel decomposition technique tailored to our problem.

Our algorithm works in two successive phases, a \emph{decomposition phase} and an \emph{optimization phase}. The decomposition phase produces a specialized \emph{decomposition tree}, a full\footnote{Recall that in a full binary tree every node has 0 or 2 children.} binary tree in which each node is associated with an induced subgraph of the input multitree. The optimization phase then runs a bottom-up dynamic programming algorithm over the nodes of the decomposition tree. While the overall process is similar to FPT algorithms for graphs with restricted treewidth, our decomposition technique and application are both novel. Our algorithm for untangled $k$-multitrees runs in $O(n^2\rho^{2k+3})$ time, thus demonstrating that lexico-minimum replica placement on untangled $k$-multitrees is in FPT, as parameterized by $\rho$ \emph{and} $k$. 

\section{Modeling Reliable Replica Placement in Multitrees}

In this section we formalize the model presented in the introduction. We model the failure domains of a data center as a \emph{multitree}, a directed acyclic graph (DAG) whose formal definition we defer to the next paragraph. Non-leaf vertices represent failure events which are typically associated with the failure of a physical hardware component, but may instead be associated with abstract events such as network maintenance or software failures. Leaf vertices represent servers on which replicas of data may be placed. A directed edge between two failure events $u$ and $v$ indicates that the failure of event $u$ may trigger failure event $v$.

A \emph{multitree} is a directed acyclic graph (DAG) in which the set of vertices reachable from any vertex forms an arboresence (see Fig. 1(a)). In the context of graph $G$, let $u \rightsquigarrow v$ denote the assertion ``there is a path from $u$ to $v$ in $G$'', and $u \rightarrow v$ denote the assertion ``there is an edge from $u$ to $v$ in G''. Then a multitree is equivalently defined as a \emph{diamond-free} DAG \cite{Furnas1994}. See Fig. 1(b) for a depiction of the forbidden subgraphs used to define diamond-free DAGs below.
\begin{definition}
	A \emph{multitree} $M = (V,E)$ is a DAG in which there are no diamonds (i.e. a DAG which is \emph{diamond-free}). A \emph{diamond} is either \begin{inparaenum}[(1)] \item a set of three vertices $a,b,c \in V$ in which $a \rightarrow b \rightsquigarrow c$, and, even when the edge $(a,b)$ is removed, $a \rightsquigarrow c$, or \item a set of four vertices $a,b,c,d \in V$ in which $a \rightsquigarrow b \rightsquigarrow d$ and $a \rightsquigarrow c \rightsquigarrow d$, while there is no path from $b$ to $c$ or vice versa.\end{inparaenum}
\end{definition}

\begin{figure}
	\captionsetup[subfigure]{justification=centering}
	\centering
	\begin{subfigure}{0.38\linewidth}
		\centering
		\includegraphics[scale=0.4]{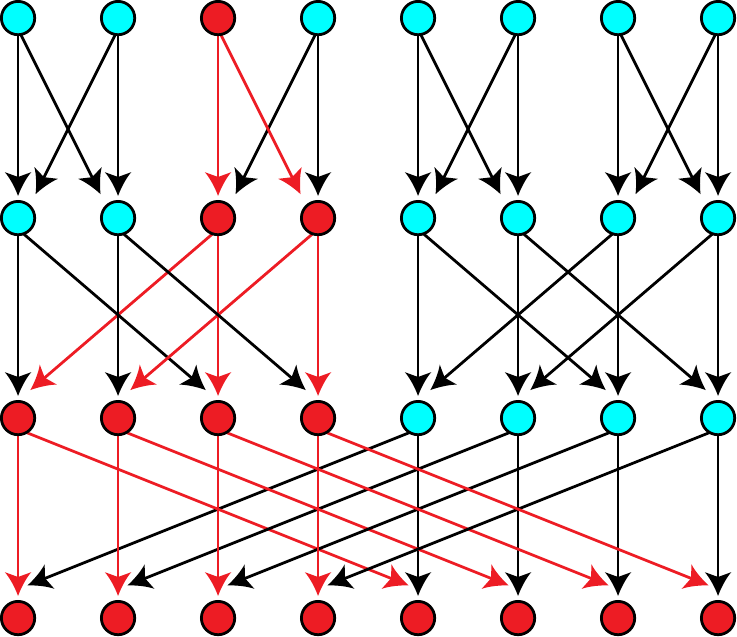}
		\subcaption{}\label{f:multitree-subfig}
	\end{subfigure}
	\hfill
	\begin{subfigure}{0.5\linewidth}
		\centering
		\begin{tikzpicture}[->, >=stealth',every node/.style={minimum height=3mm, draw, circle, node distance=4.5mm, fill=white},scale=0.8, transform shape]

\tikzstyle{snakeline} = [decorate, decoration={pre length=0.1cm,
	post length=0.1cm, snake, amplitude=.4mm,
	segment length=2mm}, ->, >=stealth']

\node (a) {$a$};
\node [below =1cm of a] (b) {$c$};
\node [right = of a, yshift=-0.75cm] (c) {$b$};

\draw (a) edge[snakeline] (b)
	  (a) edge (c)
	  (c) edge[snakeline] (b);

\node [right = 4cm of a, yshift=0.5cm](a2) {$a$};
\node [left = of a2, yshift=-1cm](b2) {$b$};
\node [right = of a2, yshift=-1cm](c2) {$c$};
\node [below = 1.5cm of a2](d2) {$d$};

\draw (a2) edge[snakeline] (b2)
	(a2) edge[snakeline] (c2)
	(b2) edge[snakeline] (d2)
	(c2) edge[snakeline] (d2);

\end{tikzpicture}
		\subcaption{}\label{f:diamonds-forbidden-subgraphs}
	\end{subfigure}
	\caption{(\subref{f:multitree-subfig}) A multitree in which red highlights depict an induced subgraph forming an arboresence, (\subref{f:diamonds-forbidden-subgraphs}) Forbidden subgraphs in which squiggles depict an arbitrary path.}\label{f:multitree-fig}
\end{figure}

A $k$-multitree is a multitree with $k$ roots. In context of a multitree $M = (V,E)$ we denote the set of leaves of $M$ by $L \subseteq V$. In context of our problem we seek a subset of leaves on which to place replicas of data. To this end, we define a \emph{placement} of $\rho$ replicas as a subset\footnote{Using a subset as opposed to a multiset rules out the possibility of placing multiple replicas on the same server, which would defeat the purpose of replication.} of leaves $P \subseteq L$ with size $|P| = \rho$.

Given a placement $P$, we associate to each failure event its \emph{failure number}: the number of replicas from $P$ which can be made unavailable should the event occur. The failure number of $u$ is equal to the number of nodes in $P$ which are reachable from $u$, which we denote as $f(u,P) := |\{x \in P : u \rightsquigarrow x \}|$.

To aggregate the failure numbers across all failure events into a single vector-valued quantity, we denote the \emph{failure aggregate} by $\vec{f}(P) = \langle p_0, p_1, ..., p_{\rho}\rangle$, where $p_i = |\{u \in V : f(u,P) = \rho - i\}|$. Intuitively, the $i^{th}$ entry of $\vec{f}(P)$ contains the number of events whose failure leaves $i$ replicas surviving.

Our optimization goal is to minimize the failure aggregate in the lexicographic order, which was motivated in the introduction. The (strict) lexicographic order $<_L$ between vectors $\vec{x} = \langle x_0,...,x_n \rangle$ and $\vec{y} = \langle y_0,...,y_n \rangle$ is defined via the formula \[x <_L y \iff \exists j \in [0, n] : (x_j < y_j \wedge \forall i < j [x_i = y_i]),\]
while the weak lexicographic order $\leq_L$ is defined by extending $<_L$ in the usual way. We use the short-hand ``lexico-minimum'' and ``lexico-minimizes'' to mean ``minimum'' and ``minimizes'' in the lexicographic order respectively.

With these definitions in hand, we provide the formal definition of the parameterized optimization problem we consider in the remainder of this paper.
\begin{optproblem}
	\problemtitle{\textsc{Lexico-minimum Single-block Placement in $k$-Multitrees} ($k$-LSP)}
	\probleminput{A $k$-multitree, $M=(V,E)$; the set of leaves $L \subseteq V$; a positive integer $\rho < |L|$}
	\problemoutput{A placement $P \subseteq L$ with $|P| = \rho$ such that $\vec{f}(P)$ is lexico-minimum among all placements $P \subseteq L$ with $|P| = \rho$.}
\end{optproblem}

\section{NP-hardness of 3-LSP} 

In this section, we concern ourselves with how the hardness of $k$-LSP depends on the parameter $k$.  Prior work has shown that $1$-LSP can be solved in polynomial time \cite{Mills2015}, since a $1$-multitree is just an arboresence. In this section we show that $3$-LSP is NP-hard, thereby ruling out a fixed-parameter tractable algorithm parameterized by the number of roots.

Specifically, we show hardness of the following decision problem.
\begin{prob}
	\problemtitle{\textsc{Lexicographic Replica Placement in $3$-multitrees} ($3$-LSP)}
	\probleminput{A $3$-multitree, $M=(V,E)$ with leaves $L \subseteq V$; a positive integer $\rho$; and a vector $\vec{w} \in \mathbb{N}^{\rho+1}$}
	\problemquestion{Is there a placement $P \subseteq L$ with $|P| = \rho$ such that $\vec{f}(P) \leq_L \vec{w}$?}
\end{prob}

We will prove that this problem is NP-hard by reduction from \textsc{Independent Set} restricted to cubic planar bridgeless graphs. Cubic planar bridgeless graphs are guaranteed to have a 3-edge-coloring \cite{Goemans2012}. Moreover, 3-coloring the edges of such graphs is equivalent to 4-coloring their faces \cite{Tait1880b}. The faces of such graphs correspond to the vertices of a planar graph, and, as a consequence of the Four Color Theorem, finding a 4-vertex-coloring of a planar graph may be done in $O(n^2)$ time \cite{Cole2008}. On the other hand, finding an independent set in such graphs is NP-hard, as was shown in \cite{Mohar2001}. We exploit the disparity in the hardness of these two problems to show that $3$-LSP is NP-hard, by reduction from the following problem.

\begin{prob}
	\problemtitle{\textsc{Restricted Independent Set} (RIS)}
	\probleminput{An undirected cubic planar bridgeless graph $G = (V,E)$; a positive integer $k$.}
	\problemquestion{Does $G$ admit an independent set of size exactly $k$?}
\end{prob}

\begin{theorem}\label{t:3-multitree-np-hard}
	RIS reduces to $3$-LSP in polynomial time. Thus, $3$-LSP is NP-hard.
\end{theorem}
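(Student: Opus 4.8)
The plan is to reduce from \textsc{Restricted Independent Set} (RIS). Fix a cubic planar bridgeless graph $G=(V,E)$ and an integer $k$. We may assume $3 \le k < |V|$: the cases $k \in \{1,2\}$ are decidable in constant time (a cubic graph always has an independent set of size $1$, and one of size $2$ unless it is $K_4$), and an independent set in a cubic graph has size at most $|V|/2$, so larger $k$ is a trivial no-instance. The first step is to compute a proper $3$-edge-coloring of $G$ in polynomial time; this exists and is efficiently computable by Tait's theorem together with the algorithmic form of the Four Color Theorem, exactly as noted in the introduction. The coloring partitions $E$ into three perfect matchings $M_1,M_2,M_3$.

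Next I would build the $3$-multitree $M$ as follows: introduce a leaf $\ell_v$ for each $v\in V$, an internal node $x_e$ for each $e\in E$, and three roots $r_1,r_2,r_3$; add an edge $r_i\rightarrow x_e$ whenever $e\in M_i$, and edges $x_e\rightarrow\ell_u$, $x_e\rightarrow\ell_v$ for each $e=\{u,v\}$. The main obstacle is verifying the key structural claim: $M$ is diamond-free, hence a multitree, with exactly the three roots $r_1,r_2,r_3$. This is precisely where properness of the edge-coloring is used. One does a short case analysis on the ``apex'' vertex $a$ of a putative diamond, which must have out-degree at least $2$ and so is a root or an edge-node; in every case a diamond of either type would force two distinct edge-nodes $x_e, x_{e'}$ ($e\ne e'$) to be simultaneously reachable from one common root (so $e,e'$ have the same color) and to reach one common leaf $\ell_v$ (so $e,e'$ share the endpoint $v$), contradicting that no two edges of the same color meet at a vertex. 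Note also that, since each $M_i$ is a perfect matching, every root reaches every leaf, so $f(r_i,P)=\rho$ for any placement $P$ of $\rho$ replicas.

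With $M$ built, set $\rho=k$ and take the target vector $\vec{w}=\langle 3,0,\dots,0,4k,\,|E|+|V|-4k\rangle\in\mathbb{N}^{k+1}$, with the value $0$ in coordinates $1$ through $k-2$ (so the length is $1+(k-2)+2=k+1$). For the correctness argument, consider any placement $P=\{\ell_v:v\in S\}$ with $|S|=k$. A leaf $\ell_v$ has failure number $1$ if $v\in S$ and $0$ otherwise; an edge-node $x_e$ has failure number $|S\cap e|\in\{0,1,2\}$; each root has failure number $k$. Writing $a_2=|E(G[S])|$ for the number of edges inside $S$ and $a_1$ for the number of edges with exactly one endpoint in $S$, cubicity gives $2a_2+a_1=3k$. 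From this one checks that coordinates $0$ through $k-3$ of $\vec{f}(P)$ equal $3,0,\dots,0$ irrespective of $S$ (since $k\ge 3$, no non-root event has failure number $\ge 3$), coordinate $k-2$ equals $a_2$, and when $a_2=0$ the last two coordinates equal $4k$ and $|E|+|V|-4k$ exactly. Hence the first coordinate at which $\vec{f}(P)$ can disagree with $\vec{w}$ is $k-2$, so $\vec{f}(P)\le_L\vec{w}$ if and only if $a_2=0$, i.e.\ if and only if $S$ is an independent set.

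Putting this together: the constructed $3$-LSP instance $(M,L,\rho,\vec{w})$ is a yes-instance precisely when $G$ has an independent set of size exactly $k$. Since the edge-coloring and the construction of $M$ and $\vec{w}$ are all polynomial-time, this is a polynomial reduction from RIS to $3$-LSP, which establishes that $3$-LSP is NP-hard. The only genuinely delicate points are the diamond-freeness verification (which crucially leverages the efficiently-computable $3$-edge-coloring, the ``easy'' side of the hardness disparity) and the bookkeeping that pins down every coordinate of $\vec{f}(P)$ via the cubicity identity so that the lexicographic test isolates exactly the quantity $a_2$.
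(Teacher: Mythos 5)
Your proposal is correct and follows essentially the same reduction as the paper: the same RIS source problem, the same three-layer multitree built from a polynomial-time $3$-edge-coloring, and the same lexicographic test isolating the count of edge-nodes with failure number $2$. The only differences are cosmetic refinements (explicit finite entries $4k$ and $|E|+|V|-4k$ in place of the paper's informal $\infty$ coordinates, and the separate handling of $k\le 2$), both of which tighten rather than change the argument.
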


\begin{proof}
	Given a cubic planar bridgeless graph $G=(V,E)$, we can form a $3$-multitree, $H$, as follows. Let $H = (V',E')$. Add a vertex to $H$ for every edge in $E$ and for every vertex in $V$. Let the vertices of $H$ that represent vertices of $G$ be denoted by $H(V)$ and let the vertices of $H$ that represent edges of $G$ be denoted by $H(E)$. Next, for every edge $e = (u,v)$ of $G$, add directed edges $(e,u)$ and $(e,v)$ to $H$.  Next, we partition $H(E)$ into three sets, $S_1, S_2, S_3$, such that no node in $H(V)$ has two neighbors in the same set. This partition corresponds to finding a $3$-edge-coloring of $G$, which may be done in $O(n^2)$ time \cite{Cole2008}. We then add three special nodes $\alpha, \beta$ and $\gamma$ to $H$, and add edges $(\alpha,s_1), (\beta,s_2), (\gamma,s_3)$ for all $s_1 \in S_1, s_2 \in S_2 $ and $s_3 \in S_3$.

	\begin{figure}
		\centering
		\hfill
		\begin{subfigure}{0.38\linewidth}
			\centering
			\begin{tikzpicture}[every node/.style={minimum height=3mm, draw, circle, node distance=4.5mm, fill=white},scale=0.4]
	
	\node (a) at (0,0) {$a$};
	\node (b) at (4.5,3) {$b$};
	\node (c) at (4,-2) {$c$};
	\node (d) at (2, -4) {$d$};
	\node (f) at (8,-2) {$f$};
	\node (g) at (6, 0) {$g$};
	\node (e) at (11,-3) {$e$};
	\node (h) at (10,2)  {$h$};
	
		\path 
		(a) edge[ultra thick, color=cyan] (b) 
		(a) edge[ultra thick, color=green] (c)
		(a) edge[ultra thick, color=red] (d)
		(c) edge[ultra thick, color=red] (f)
		(c) edge[ultra thick, color=cyan] (d)
		(f) edge[ultra thick, color=cyan] (e)
		(g) edge[ultra thick, color=green] (f)
		(d) edge[ultra thick, color=green] (e)
		(b) edge[ultra thick, color=red] (g)
		(h) edge[ultra thick, color=green] (b)
		(e) edge[ultra thick, color=red] (h)
		(g) edge[ultra thick, color=cyan] (h);
		
		\node[draw=none,fill=none] at ($(a.center)!0.5!(b.center)$) {1};
		\node[draw=none,fill=none] at ($(b.center)!0.5!(g.center)$) {7};
		\node[draw=none,fill=none] at ($(g.center)!0.5!(h.center)$) {9};
		\node[draw=none,fill=none] at ($(b.center)!0.5!(h.center)$) {12};
		\node[draw=none,fill=none] at ($(g.center)!0.5!(f.center)$) {8};
		\node[draw=none,fill=none] at ($(h.center)!0.5!(e.center)$) {11};
		\node[draw=none,fill=none] at ($(e.center)!0.5!(f.center)$) {10};
		\node[draw=none,fill=none] at ($(d.center)!0.5!(e.center)$) {5};
		\node[draw=none,fill=none] at ($(c.center)!0.5!(f.center)$) {6};
		\node[draw=none,fill=none] at ($(d.center)!0.5!(c.center)$) {4};
		\node[draw=none,fill=none] at ($(a.center)!0.5!(d.center)$) {3};
		\node[draw=none,fill=none] at ($(a.center)!0.5!(c.center)$) {2};

\end{tikzpicture}
		\end{subfigure}
		\hfill
		\begin{subfigure}{0.5\linewidth}
			\centering
			\begin{tikzpicture}[->, >=stealth',every node/.style={minimum height=2mm, draw, circle, inner sep = 0mm,node distance=3mm, fill=black}, every edge/.style={draw=gray!80!white},scale=0.6]

	\node [label={1}] (1) {};
	\node [label={2},right=of 1] (2) {};
	\node [label={3},right=of 2] (3) {};
	\node [label={4},right=of 3] (4) {};
	\node [label={5},right=of 4] (5) {};
	\node [label={6},right=of 5] (6) {};
	\node [label={7},right=of 6] (7) {};
	\node [label={8},right=of 7] (8) {};	
	\node [label={9},right=of 8] (9) {};	
	\node [label={10},right=of 9] (10) {};		
	\node [label={11},right=of 10] (11) {};
	\node [label={12},right=of 11] (12) {};
	
	\node [below=1cm of 6, label=270:{$d$}] (d) {};
	\node [left=of d, label=270:{$c$}] (c) {};
	\node [left=of c, label=270:{$b$}] (b) {};
	\node [left=of b, label=270:{$a$}] (a) {};
	\node [right=of d, label=270:{$e$}] (e) {};
	\node [right=of e, label=270:{$f$}] (f) {};
	\node [right=of f, label=270:{$g$}] (g) {};
	\node [right=of g, label=270:{$h$}] (h) {};
	
	\node [above=1cm of 6, xshift=2mm, label={$\beta$}, fill=red] (beta) {}; 
	\node [left=1cm of beta, label={$\alpha$}, fill=cyan] (alpha) {}; 	
	\node [right=1cm of beta, label={$\gamma$},fill=green] (gamma) {}; 	
	
	\begin{scope}[on background layer]
		
		\draw (1) edge (a)
		(2) edge (a)
		(3) edge (a)
		(1) edge (b)
		(7) edge (b)
		(12) edge (b)
		(2) edge (c)
		(4) edge (c)
		(6) edge (c)
		(3) edge (d)
		(4) edge (d)
		(5) edge (d)
		(5) edge (e)
		(10) edge (e)
		(11) edge (e)
		(6) edge (f)
		(8) edge (f)
		(10) edge (f)
		(7) edge (g)
		(8) edge (g)
		(9) edge (g)
		(9) edge (h)
		(11) edge (h)
		(12) edge (h)
		;
		
		\draw (alpha) edge (1)
		(alpha) edge (4)
		(alpha) edge (9)
		(alpha) edge (10)
		
		(beta) edge (3)
		(beta) edge (7)
		(beta) edge (6)
		(beta) edge (11)
		
		(gamma) edge (2)
		(gamma) edge (5)
		(gamma) edge (8)
		(gamma) edge (12)
		;
	\end{scope}

		\draw 
			(gamma) edge[green!70!black, thick] (2)
			(gamma) edge[green!70!black, thick] (5)
			(gamma) edge[green!70!black, thick] (8)
			(gamma) edge[green!70!black, thick] (12)
			(2) edge[green!70!black, thick] (a)
			(12) edge[green!70!black, thick] (b)
			(2) edge[green!70!black, thick] (c)
			(5) edge[green!70!black, thick] (d)
			(5) edge[green!70!black, thick] (e)
			(8) edge[green!70!black, thick] (f)
			(8) edge[green!70!black, thick] (g)
			(12) edge[green!70!black, thick] (h)
		;

\end{tikzpicture}
		\end{subfigure}
		\hfill
		\caption[Illustration of the NP-hardness reduction for LSP in 3-multitreess.]{The $3$-edge-colored cubic planar bridgeless graph on the left maps to the 3-multitree on the right via our reduction. The roots of the 3-multitree correspond to the three color classes used in the 3-edge coloring on the left. On the right, the subtree induced by descendants of $\gamma$ is highlighted.}\label{f:3-multitree-reduction}
	\end{figure}
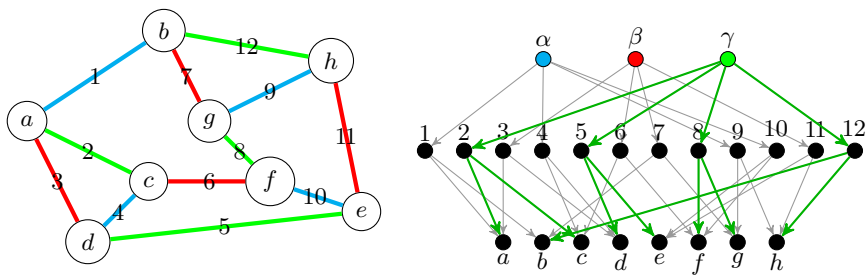
	
	We claim that $H$ is a 3-multitree. $H$ clearly has only three nodes with in-degree zero, so it suffices to show that no diamond is formed. Three-node diamonds are clearly impossible by construction. Instead suppose that there are vertices $a,b,c,d$ of $H$ which form a four-node diamond (i.e., $(a,b)(a,c)(b,c)(c,d) \in E'$). By construction, $d$ must be a node in $H(V)$, thus $b$ and $c$ must be nodes in $H(E)$, and $a = \chi$ for some $\chi \in \{\alpha,\beta,\gamma\}$, all of which follows from our construction. But then $d$ is a vertex in $H(V)$ which has two of its neighbors connected to the same root node $\chi$, a contradiction. Hence no diamond is created and $H$ is a $3$-multitree.
	
	Since each node in $G$ must be adjacent to an edge from each color class, every node in $H(V)$ must have $\alpha,\beta$ and $\gamma$ as ancestors. Thus, each of $\alpha,\beta$ and $\gamma$ have failure number $\rho$ in any placement of size $\rho$ on the leaves of $H$. Finally, we complete the reduction by showing that $H$ has a placement $P\subseteq H(V)$ with $|P| = k$ for which $\vec{f}(P) \leq_L \langle 3,0,...,0,\infty,\infty \rangle$ if and only if $G$ has an independent set of size $k$. This portion of the proof is straight-forward, and has been moved to Appendix \ref{a:omitted-proofs}. \qed
\end{proof}

Since it shows that, $k$-LSP is NP-hard even for a \emph{fixed} value of the parameter $k$, Theorem \ref{t:3-multitree-np-hard} rules out the existence of an FPT algorithm for $k$-multitrees as parameterized by the number of roots. Thus, $k$-LSP falls no lower in the $W$-hierarchy than $W[1]$. While a polynomial time algorithm for $1$-LSP was shown in \cite{Mills2015}, the complexity of $2$-LSP is open.

\section{Untangling Multitrees}

On the positive side, we show how a tree decomposition approach may be employed to yield an FPT algorithm for the subclass of \emph{untangled} $k$-multitrees. We use the term  \emph{connectors} to refer to vertices of a multitree which have in-degree strictly greater than 1. An untangled multitree is a multitree with additional requirements placed on the ancestry of connectors. Roughly speaking, we require that an untangled multitree may be split into two subgraphs such that a) the descendants of each non-root node fall into the same subgraph, and b) each connector is present in only \emph{one} of the two subgraphs. This property allows us to perform a decomposition of each multitree into two subgraphs. To make this idea precise, we employ the following modified notion of laminarity which we call a \emph{laminar pair} of set families.

\begin{definition}
	Two set families $\mathcal{F}, \mathcal{F}' \subseteq 2^X$ on the same ground set $X$ form a \emph{laminar pair} when, for all $U \in \mathcal{F}$, $V \in \mathcal{F}'$, either $U \subseteq V, U \supseteq V$, or $U \cap V = \emptyset$.
\end{definition}

To ensure the decomposability of a multitree $M=(V,E)$ into subgraphs $M_1$ and $M_2$, we require that for every child $c$ of each root, the set of connectors which are descendants of $c$ all lie in either $M_1$ or $M_2$. To formalize this idea, we define the \emph{connector shadow} as follows.
\begin{definition}
	Given a vertex $u \in V$, the \emph{connector shadow of $u$}, denoted $Sh(u)$, is the set of connectors of $M$ which are descendants of $u$.
\end{definition}
\begin{definition}
	Given a vertex $u \in V$, with children $c_1,...,c_m$, the \emph{child shadows} of $u$ is the set family defined as $\mathcal{C}(u) := \{Sh(c_1), ..., Sh(c_m)\}$.
\end{definition}
\begin{definition}
	Multitree $M = (V,E)$ is said to be \emph{untangled}  if, for every pair of vertices $u,v \in V$ where $u$ is not reachable from $v$ and vice versa, $\mathcal{C}(u)$ and $\mathcal{C}(v)$ are laminar pairs.
\end{definition}
Being untangled is easily seen to be a hereditary graph property\footnote{That is, if $M$ is an untangled multitree, then for every $U \subseteq V$, the vertex-induced subgraph $M[U] = (U, (U \times U) \cap E)$ is also an untangled multitree.}. 

While the class of untangled multitrees may appear to be highly specialized, it is in fact general enough to capture any directed acyclic graph. Any directed acyclic graph $G = (V,E)$ with leaves $L$ can be converted to a \emph{canonical placement model}, $H = (V,E')$, where
\[E' = \{(u,v) : u \in V \setminus L, v \in L, \text{ and } v \text{ is reachable from $u$ in $G$.}\}.\] See \autoref{f:canonicalplacements} for an example. By definition, the canonical placement model $H$ has the same reachability relation as the original graph $G$. This further implies that the failure numbers of placements on the leaves of $H$ have the same failure aggregate as their counterparts in $G$. Thus, a lexico-minimum placement in $H$ is also lexico-minimum in $G$. Furthermore, $H$ is easily seen to be a multitree, but also an \emph{untangled} multitree, since the set of child shadows for any vertex in $H$ is a family only containing singleton sets, and any pair of families of singleton sets trivially forms a laminar pair.

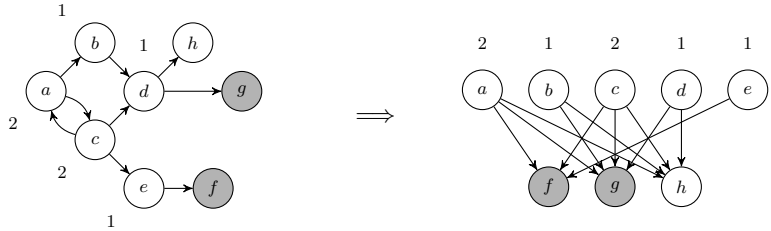
\begin{figure}[htb]
	\begin{subfigure}{0.45\linewidth}
		\centering 
		\begin{tikzpicture}[->,>=stealth',every node/.style={minimum height=7mm, draw, circle, node distance=5mm},scale=0.75,transform shape]
	\node [] (a) {$a$};
	\node [above right=of a] (b) {$b$};
	\node [below right= of a] (c) {$c$};
	\node [below right=of c] (e) {$e$};
	\node [above right=of c] (d) {$d$};
	\node [above right=of d] (h) {$h$};
	\node [fill=white!70!black,below right=of h] (g) {$g$};
	\node [fill=white!70!black,right=of e] (f) {$f$};
	
	\path 
	   (a) edge (b)
	   (b) edge (d)
	   (d) edge (h)
	   (d) edge (g)
	   (a) edge[bend left] (c)
	   (c) edge[bend left] (a)
	   (c) edge (e)
	   (c) edge (d)
	   (e) edge (f);
	   

		\node [draw=none, below left =1mm of a] {$2$};
		\node [draw=none, above left =1mm of b] {$1$};
		\node [draw=none, below left =1mm of c] {$2$};
		\node [draw=none, above =1mm of d] {$1$};
		\node [draw=none, below left =1mm of e] {$1$};
\end{tikzpicture}
	\end{subfigure}
	$\implies$
	\begin{subfigure}{0.45\linewidth}
		\centering
		\begin{tikzpicture}[->,>=stealth',every node/.style={minimum height=7mm, draw, circle, node distance=4.5mm},scale=0.75,transform shape]
	\node []           (a) {$a$};
	\node [right=of a] (b) {$b$};
	\node [right= of b] (c) {$c$};
	\node [right=of c] (d) {$d$};
	\node [right=of d] (e) {$e$};
	\node [fill=white!70!black,below=10mm of c] (g) {$g$};
	\node [right=of g]  (h) {$h$};
	\node [fill=white!70!black,left=of g] (f) {$f$};
	
	\path 
		(a) edge (f)
		(a) edge (g)
		(a) edge (h)
		(b) edge (g)
		(b) edge (h)
		(c) edge (f)
		(c) edge (g)
		(c) edge (h)
		(d) edge (g)
		(d) edge (h)
		(e) edge (f);
		
		\node [draw=none, above =1mm of a] {$2$};
		\node [draw=none, above =1mm of b] {$1$};
		\node [draw=none, above =1mm of c] {$2$};
		\node [draw=none, above =1mm of d] {$1$};
		\node [draw=none, above =1mm of e] {$1$};
\end{tikzpicture}
	\end{subfigure}
	\caption{A directed acyclic graph on the right, and the associated canonical placement model on the left. The highlighted nodes $\{f,g\}$ form a placement which induces the same failure numbers in both graphs.}\label{f:canonicalplacements}
\end{figure}
\vspace{-1cm}

\section{Decomposing $k$-multitrees}

As previously discussed, our algorithm runs in two sequential phases: a decomposition phase and an optimization phase. The decomposition phase of our algorithm takes as input a (weakly-connected) untangled $k$-multitree $M =(V,E)$ and produces as output a \emph{decomposition tree}. A decomposition tree is a full binary tree in which each node $u$ is associated with a subset of vertices of $M$ we call a \emph{subproblem}, denoted by $\Gamma_u \subseteq V$.

\begin{definition}\label{d:subproblem-tree}
	A \emph{decomposition tree} $\tau$ is a binary tree in which each node $u$ is associated with a subproblem $\Gamma_u \subseteq V$.
\end{definition}

\begin{definition}\label{d:trivial}
	A subproblem $\Gamma_u$ is said to be \emph{trivial} if $\Gamma_u$ contains no leaf nodes.
\end{definition}

To ensure that our decomposition preserves optimal substructure, we define the notion of an \emph{admissible} subproblem. In every decomposition tree produced by our procedure, internal nodes are associated with admissible subproblems. 

\begin{definition}\label{d:child-descendant-complete}
	A subproblem $\Gamma_u \subseteq V$ is \emph{child-descendant complete} if, for each node $v$ which is a child of a root of $M[\Gamma_u]$, each descendant of $v$ is present in $\Gamma_u$.
\end{definition}
\begin{definition}\label{d:connector-complete}
	A subproblem $\Gamma_u \subseteq V$ of multitree $M = (V,E)$ is \emph{connector complete} if, for every connector $c \in V$, if one parent of $v$ is contained in $\Gamma_u$, then \emph{all} parents of $v$ are contained in $\Gamma_u$. Formally, if any node $v \in \Gamma_u$ is connected to $c$ by an edge $(v,c)$, then for every node $v \in V$ such that $(v,c) \in E$, $v$ is also in $\Gamma_u$. 
\end{definition}
\begin{definition}\label{d:admissible}
	A subproblem $\Gamma_u \subseteq V$ is \emph{admissible} if it is both connector complete and child-descendant complete.
\end{definition}

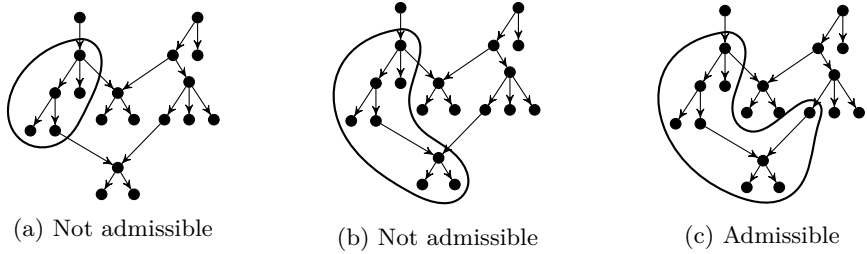
\begin{figure}
	\captionsetup[subfigure]{justification=centering}
	\centering
	\begin{subfigure}{0.3\linewidth}
		\centering
		\begin{tikzpicture}[->, >=stealth',every node/.style={minimum height=2mm, draw, circle, inner sep = 0mm,node distance=2mm, fill=black},scale=0.7,transform shape]
	
	\node [] (l) {};
	\node [right=2cm of l] (r) {};
	
	\node [below=0.5cm of l] (1) {};
	\node [below=0.5cm of 1] (2) {};
	\node [right=0.5cm of 2] (3) {};
	\node [left=0.25cm of 2] (4) {};
	\node [below= 0.5cm of 4] (5) {};
	\node [left=0.25cm of 5] (6) {};
	
	\node [below right =0.5cm of 3, xshift=-2mm] (7) {};
	\node [below left = 0.5cm of 3, xshift=2mm] (8) {};
	
	\node [below=0.5cm of r] (9) {};
	\node [left=0.25cm of 9] (10) {};
	\node [below right = 0.5cm of 10, xshift=-2mm] (11) {};
	
	\node [below = 0.5cm of 11] (12) {};
	\node [right=0.25cm of 12] (13) {};
	\node [left=0.25cm of 12] (14) {};
	
	\node [below =1.2cm of 3] (15) {};
	
	\node [below left=0.5cm of 15, xshift=2mm] (16) {};
	\node [below right=0.5cm of 15, xshift=-2mm] (17) {};

	\draw (l) edge (1)
	      (1) edge (2)
	      (1) edge (3)
	      (1) edge (4)
	      (4) edge (5)
	      (4) edge (6)
	      (3) edge (7)
	      (3) edge (8)
	      
	      (r) edge (9)
	      (r) edge (10)
	      (10) edge (3)
	      (10) edge (11)
	      
	      (11) edge (12)
	      (11) edge (13)
	      (11) edge (14)
	      
	      (5) edge (15)
	      (14) edge (15)
	      
	      (15) edge (16)
	      (15) edge (17)
		;
	\begin{scope}[on background layer]
		\draw[thick, black, >=] ([shift={(0,2mm)}]1.north) to[curve through={
			([shift={(-1mm,2mm)}]1.north west)..
			([shift={(-1mm,1)}]6.north west) ..
			([shift={(-1mm, -1mm)}]6.south west)..
			([shift={(1mm,-1mm)}]5.south east) ..
			([shift={(1mm,-1mm)}]2.south east) ..
			([shift={(1mm,2mm)}]1.north east)}] ([shift={(0,2mm)}]1.north);	

	\end{scope}	
\end{tikzpicture}
		\subcaption{Not admissible}\label{f:non-admissible-1}
	\end{subfigure}
	\hfill
	\begin{subfigure}{0.3\linewidth}
		\centering
		\begin{tikzpicture}[->, >=stealth',every node/.style={minimum height=2mm, draw, circle, inner sep = 0mm,node distance=2mm, fill=black},scale=0.7,transform shape]
	
	\node [] (l) {};
	\node [right=2cm of l] (r) {};
	
	\node [below=0.5cm of l] (1) {};
	\node [below=0.5cm of 1] (2) {};
	\node [right=0.5cm of 2] (3) {};
	\node [left=0.25cm of 2] (4) {};
	\node [below= 0.5cm of 4] (5) {};
	\node [left=0.25cm of 5] (6) {};
	
	\node [below right =0.5cm of 3, xshift=-2mm] (7) {};
	\node [below left = 0.5cm of 3, xshift=2mm] (8) {};
	
	\node [below=0.5cm of r] (9) {};
	\node [left=0.25cm of 9] (10) {};
	\node [below right = 0.5cm of 10, xshift=-2mm] (11) {};
	
	\node [below = 0.5cm of 11] (12) {};
	\node [right=0.25cm of 12] (13) {};
	\node [left=0.25cm of 12] (14) {};
	
	\node [below =1.2cm of 3] (15) {};
	
	\node [below left=0.5cm of 15, xshift=2mm] (16) {};
	\node [below right=0.5cm of 15, xshift=-2mm] (17) {};

	\draw (l) edge (1)
	      (1) edge (2)
	      (1) edge (3)
	      (1) edge (4)
	      (4) edge (5)
	      (4) edge (6)
	      (3) edge (7)
	      (3) edge (8)
	      
	      (r) edge (9)
	      (r) edge (10)
	      (10) edge (3)
	      (10) edge (11)
	      
	      (11) edge (12)
	      (11) edge (13)
	      (11) edge (14)
	      
	      (5) edge (15)
	      (14) edge (15)
	      
	      (15) edge (16)
	      (15) edge (17)
		;
	\begin{scope}[on background layer]
		\draw[thick, black, >=] ([shift={(0,2mm)}]1.north) to[curve through={
			([shift={(-1mm,2mm)}]1.north west)..
			([shift={(-1mm,1)}]6.north west) ..
			([shift={(-1mm, -1mm)}]6.south west)..
			([shift={(-1mm,-1mm)}]16.south west) ..
			([shift={(1mm,-1mm)}]17.south east) ..
			([shift={(1mm,1mm)}]15.north east) ..
			([shift={(7mm,0mm)}]5.south east) ..
			([shift={(1mm,-1mm)}]2.south east) ..
			([shift={(1mm,2mm)}]1.north east)}] ([shift={(0,2mm)}]1.north);	
	\end{scope}	
\end{tikzpicture}
		\subcaption{Not admissible}\label{f:non-admissible-2}
	\end{subfigure}
	\hfill
	\begin{subfigure}{0.3\linewidth}
		\centering
		\begin{tikzpicture}[->, >=stealth',every node/.style={minimum height=2mm, draw, circle, inner sep = 0mm,node distance=2mm, fill=black},scale=0.7,transform shape]
	
	\node [] (l) {};
	\node [right=2cm of l] (r) {};
	
	\node [below=0.5cm of l] (1) {};
	\node [below=0.5cm of 1] (2) {};
	\node [right=0.5cm of 2] (3) {};
	\node [left=0.25cm of 2] (4) {};
	\node [below= 0.5cm of 4] (5) {};
	\node [left=0.25cm of 5] (6) {};
	
	\node [below right =0.5cm of 3, xshift=-2mm] (7) {};
	\node [below left = 0.5cm of 3, xshift=2mm] (8) {};
	
	\node [below=0.5cm of r] (9) {};
	\node [left=0.25cm of 9] (10) {};
	\node [below right = 0.5cm of 10, xshift=-2mm] (11) {};
	
	\node [below = 0.5cm of 11] (12) {};
	\node [right=0.25cm of 12] (13) {};
	\node [left=0.25cm of 12] (14) {};
	
	\node [below =1.2cm of 3] (15) {};
	
	\node [below left=0.5cm of 15, xshift=2mm] (16) {};
	\node [below right=0.5cm of 15, xshift=-2mm] (17) {};

	\draw (l) edge (1)
	      (1) edge (2)
	      (1) edge (3)
	      (1) edge (4)
	      (4) edge (5)
	      (4) edge (6)
	      (3) edge (7)
	      (3) edge (8)
	      
	      (r) edge (9)
	      (r) edge (10)
	      (10) edge (3)
	      (10) edge (11)
	      
	      (11) edge (12)
	      (11) edge (13)
	      (11) edge (14)
	      
	      (5) edge (15)
	      (14) edge (15)
	      
	      (15) edge (16)
	      (15) edge (17)
		;
	\begin{scope}[on background layer]
		\draw[thick, black, >=] ([shift={(0,2mm)}]1.north) to[curve through={
			([shift={(-1mm,2mm)}]1.north west)..
			([shift={(-1mm,1)}]6.north west) ..
			([shift={(-1mm, -1mm)}]6.south west)..
			([shift={(-1mm,-1mm)}]16.south west) ..
			([shift={(1mm,-1mm)}]17.south east) ..
			([shift={(1mm,-1mm)}]14.south east) ..
			([shift={(1mm,1mm)}]14.north east) ..
			([shift={(-1mm,1mm)}]14.north west) ..
			([shift={(7mm,0mm)}]5.south east) ..
			([shift={(1mm,-1mm)}]2.south east) ..
			([shift={(1mm,2mm)}]1.north east)}] ([shift={(0,2mm)}]1.north);	
	\end{scope}	
\end{tikzpicture}
		\subcaption{Admissible}\label{f:admissible-example}
	\end{subfigure}
	\caption[Examples and non-examples of admissible subproblems.]{Each circled region denotes a subset of vertices. The subset in (a) is not child-descendant complete, (b) is not connector complete, while (c) is admissible. }\label{f:admissible-non-admissible-examples}
\end{figure}

Examples of admissible and non-admissible subproblems are shown in \autoref{f:admissible-non-admissible-examples}. Notice that, according to Definition \ref{d:admissible}, $V$ forms an admissible subproblem. This ``sub''-problem forms the root of the decomposition tree we will construct.
Our decomposition procedure decomposes each admissible subproblem into two subproblems each of which is either 1) trivial, 2) base, or 3) admissible. The decomposition is continued on admissible subproblems, while trivial and base subproblems form the leaves of the decomposition tree we will construct.

\begin{definition}\label{d:k-multitree-base-subproblem}
	A subproblem $\Gamma_u \subseteq V$ is said to be \emph{base} if $M[\Gamma_u$] forms either a $j$-multitree where $j < k$, or a trivial graph\footnote{Recall that a \emph{trivial graph} is a graph with no edges.} on $k$ nodes.
\end{definition}
Base subproblems which form $j$-multitrees for $j > 1$ are decomposed inductively by a  decomposition procedure for $j$-multitrees. Base subproblems which are $1$-multitrees are not decomposed any further. In the optimization phase, base subproblems which are $1$-multitree subproblems will be solved via the algorithm for LSP in trees presented in \cite{Mills2015}. 

Each subproblem $\Gamma_u$ is associated with a set of \emph{local roots}, which are roots of the subgraph induced by $M[\Gamma_u]$. Let $R(\Gamma_u)$ be the set of local roots of $\Gamma_u$.  Our decomposition procedure works by applying one of five cases based on the structure of the local roots and their adjacent nodes. Given a non-base, non-trivial admissible subproblem, $\Gamma_u$, the decomposition procedure uses the following recursive cases to construct a decomposition tree $\tau$.

\begin{itemize}
	\item (UP): If some local root $r \in R(\Gamma_u)$ has a single child which is not a connector, we can remove $r$ from $R(\Gamma_u)$ to form an admissible subproblem,\footnote{\label{ft:child-desc}Where admissibility follows by child-descendant completeness of $\Gamma_u$.} while $\{r\}$ forms a trivial subproblem.
	\item (OUT): If some local root $r \in R(\Gamma_u)$ has a child $c$ which has no connectors as descendants, removing $c$ and all of its descendants from $\Gamma_u$ forms an admissible subproblem.\footnoteref{ft:child-desc} Moreover, the set containing node $c$ along with its descendants forms a base subproblem.\footnoteref{ft:child-desc}
	\item (INCLUDE): If local roots in set $Q \subseteq R(\Gamma_u)$ each share a child $c$, which is the only child of each root in $Q$ and, moreover, every parent of $c$ is contained in $Q$, then we can remove the set of local roots $Q$ to form an admissible subproblem\footnoteref{ft:child-desc} $\Gamma_u \setminus Q$, while $Q$ forms a trivial subproblem.
	\item (MERGE): If every local root has one or more children and at least one local root has at least two children, then we shall show how to partition the children of each local root node along with their descendants to form two admissible subproblems $\Gamma'$ and $\Gamma''$.
\end{itemize}

To each admissible subproblem we attempt to apply each of the above cases in the order given. Only when one case does not apply are the following cases checked. 
The UP, OUT, and INCLUDE cases are each used to peel off the ``easy'' portions of the subproblem. The MERGE case is the workhorse of the decomposition, and requires additional discussion.

To partition the children of local roots in the MERGE case, we find maximal connected components in a certain hypergraph. Algorithms for finding maximal connected components in a (directed\footnote{An algorithm for undirected hypergraphs with the same running time exists. In any case, undirected hypergraphs can be handled via \cite{Allamigeon2014} by adding an extra hyperedge going in the reverse direction.}) hypergraph in $O(\alpha(N)N)$ time are known \cite{Allamigeon2014}, where $N$ is the size of the description of the hypergraph, and $\alpha(N)$ is the inverse Ackermann function. 
We will therefore constrain ourselves to discussing the hypergraph and its connection to the decomposition procedure.

In order to preserve admissibility in the MERGE case, we require that each connector from $\Gamma_u$ lie in $\Gamma'$ or $\Gamma''$ and not both. To ensure this, we form a hypergraph $H$ which has as vertices the connectors present in $\Gamma_u$, denoted by $\kappa(\Gamma_u) \subseteq \Gamma_u$. The hyperedges of $H$ are formed by the child shadows of all local roots of $\Gamma_u$. Formally, $H$ is defined via
\begin{equation}\label{eq:H-def}
H := \Big(\kappa(\Gamma_u), \bigcup_{r \in R(\Gamma_u)} \mathcal{C}(r) \Big).
\end{equation}
Thus, each hyperedge of $H$ is associated with a child of some local root of $\Gamma_u$. This association between hyperedges of $H$ and children of nodes in $R(\Gamma_u)$ is employed to further associate a subset of children of $R(\Gamma_u)$ to each strongly connected component of $H$. We form the subproblems $\Gamma'$ and $\Gamma''$ by partitioning children of $R(\Gamma_u)$ to ensure that children which fall into the same connected component of $H$ lie in the same subproblem, either $\Gamma'$ or $\Gamma''$. For example, in \autoref{f:hypergraph-example}, the children $a,b,c$ and $d$ are each associated with one maximal connected component of $H$, while the child $e$ is associated with another.

\begin{figure}
	\centering
	\hfill
	\begin{subfigure}{0.48\textwidth}
		\centering
		\begin{tikzpicture}[->, >=stealth',every node/.style={minimum height=2mm, draw, circle, inner sep = 0mm,node distance=5mm, fill=black}]

\node (r1) {};
\node [right=8mm of r1] (r2) {};
\node [right=16mm of r2] (r3) {};
\node [right=of r3] (r4) {};

\node [below = of r1, label=180:{$a~$}](b) {};
\node [right = of b, label=120:{$b~$}](a) {};
\node [right = of a, label=60:{$~c$}](d) {};
\node [right =8mm of d, label=120:{$d~$}](c) {};
\node [right = of c, label=0:{$~e$}] (e) {};

\draw (r1) edge (b);
\draw (r2) edge (a);
\draw (r2) edge (d);
\draw (r3) edge (c);
\draw (r3) edge (e);
\draw (r4) edge (e);

\node [below= 12mm of b, label={[label distance = 2mm]270:{$1$}}, xshift=4mm] (1) {};
\node [right= of 1, label={[label distance = 2mm]270:{$2$}}] (2) {};
\node [right=of 2, label={[label distance = 2mm]270:{$3$}}]  (3) {};
\node [right=of 3, label={[label distance = 2mm]270:{$4$}}]  (4) {};

\node [below=5mm of b] (b1) {};
\node [right =12mm of b1, yshift=2mm] (d1) {};
\node [right =of d1, yshift=-2mm] (c1) {};
\node [right = of c1] (c2) {};

\draw (a) edge (1);
\draw (b1) edge (1);
\draw (b) edge (b1);
\draw (b1) edge (2);
\draw (d) edge (d1);
\draw (d1) edge (3);
\draw (d1) edge (4);
\draw (c) edge (c1);
\draw (c) edge (c2);
\draw (c1) edge (1);
\draw (c1) edge (2);
\draw (c2) edge (3);
\draw (c2) edge (4);

\end{tikzpicture}
	\end{subfigure}
	\hfill
	\begin{subfigure}{0.48\textwidth}
		\centering
		\begin{tikzpicture}[every node/.style={minimum height=7mm},scale=0.5] 
\node[] (1) at (1,0) {$1$};
\node[] (2) at (3,0) {$2$};
\node[] (3) at (5,0) {$3$};
\node[] (4) at (7,0) {$4$};
\node[] (e) at (10,0) {$e$};
\node[fit=(1), shape=ellipse,draw, inner sep = -0.75mm, label={[name=lblA, label distance = -1mm]20:{$b$}}] (elps1) {};
\node[fit=(1)(2)(lblA), shape=ellipse,draw, inner sep = -0.75mm, label={[name=lblB, label distance =-1mm]20:{$a$}}, yshift=-2mm, yscale=0.95] (elps2) {};
\node[fit=(3)(4), shape=ellipse,draw, inner sep = -0.75mm, label={[name=lblD,label distance = -3mm]25:{$c$}}] (elps2) {};
\node[fit=(1)(2)(3)(4)(lblA)(lblB)(lblD), shape=ellipse,draw, inner sep = -0.75mm, label={[name=lblD, label distance =-2mm]5:{$d$}}, yshift=-2mm, xshift=-1mm,yscale=1.1] (elps3) {};
\node[fit=(e), shape=ellipse, draw, inner sep=-0.75mm, label=45:{$e$}] (elps4) {};
\end{tikzpicture}
	\end{subfigure}
	\vspace{-0.5cm}
	\caption{The hypergraph on the right depicts $H$ for the 4-multitree on the left, as defined in (\ref{eq:H-def}). Each child of a root is associated with a hyperedge which contains the connectors reachable from it. For example, $c$ is associated with the hyperedge $\{3,4\}$.}\label{f:hypergraph-example}
\end{figure}
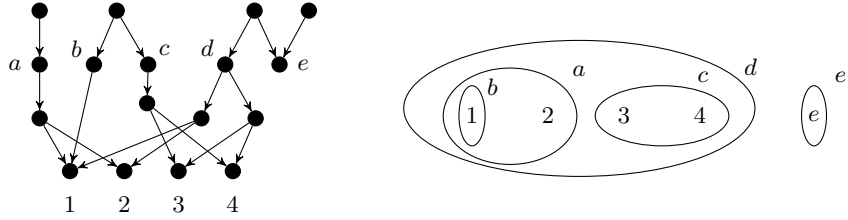


To ensure that this decomposition may be repeated as needed on the subproblems $\Gamma'$ and $\Gamma''$ we must establish a few properties of $H$.


\begin{lemma}\label{l:hypergraph-props}
	A hypergraph $H$ as defined via \eqref{eq:H-def} may be decomposed into maximal connected components $H_1 = (V_1, \mathcal{E}_1), ..., H_t = (V_t, \mathcal{E}_t)$ for which the following properties hold.
	\begin{enumerate}[i)]
		\item for all $i$, $V_i \in \mathcal{E}_i$, (i.e. each maximal connected component is covered by a single edge.)
		\item for all $i \neq j$, $V_i \cap V_j = \emptyset$, (i.e. no connector lies in two maximal connected components.)
		\item for all $r, r' \in R(\Gamma_u)$ and $i \in {1,...,t}$: $\mathcal{C}(r) \cap \mathcal{E}_i$ and $\mathcal{C}(r') \cap \mathcal{E}_i$ form a laminar pair. 
	\end{enumerate}
\end{lemma}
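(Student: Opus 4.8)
The plan is to analyze the hypergraph $H = (\kappa(\Gamma_u), \bigcup_{r} \mathcal{C}(r))$ using the structure imposed by the multitree being untangled. First I would establish property (iii), since it is the place where the untangled hypothesis does the real work, and then derive (i) and (ii) from it together with basic facts about connected components.

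For property (iii), fix two distinct local roots $r, r'$ and a maximal connected component $H_i$. Since $r \not\rightsquigarrow r'$ and $r' \not\rightsquigarrow r$ (they are both roots of $M[\Gamma_u]$, hence incomparable in the reachability order of $M$), the untangled hypothesis tells us that $\mathcal{C}(r)$ and $\mathcal{C}(r')$ form a laminar pair. The key observation is that laminarity is inherited under intersection with a fixed set of hyperedges \emph{provided that intersection does not accidentally make two sets equal that were formerly one contained in the other} --- but in fact, if $Sh(c) \subseteq Sh(c')$ for a child $c$ of $r$ and $c'$ of $r'$, then after restricting attention to $V_i$ we still have $Sh(c) \cap V_i \subseteq Sh(c') \cap V_i$, so the containment is preserved; similarly disjointness is preserved. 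Hence $\mathcal{C}(r) \cap \mathcal{E}_i$ and $\mathcal{C}(r') \cap \mathcal{E}_i$ remain a laminar pair. (One should also note that distinct children of the \emph{same} root have disjoint connector shadows, since a multitree is diamond-free: if a connector $x$ were a descendant of two distinct children $c_1, c_2$ of $r$, then $r \rightsquigarrow c_1 \rightsquigarrow x$ and $r \rightsquigarrow c_2 \rightsquigarrow x$ with $c_1, c_2$ incomparable, which is a four-node diamond. This fact is needed to treat $r = r'$ and will also be used below.)

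For property (ii), this is immediate: distinct maximal connected components of any hypergraph partition the vertex set touched by edges, and every connector in $\kappa(\Gamma_u)$ lies on some hyperedge --- namely, a connector $c$ has some parent $v$ in $\Gamma_u$, and by following $v$ up to a local root $r$ we find a child $c'$ of $r$ with $c \in Sh(c')$, so $c$ belongs to the hyperedge $Sh(c')$. Thus every connector lies in exactly one $V_i$, giving $V_i \cap V_j = \emptyset$ for $i \neq j$. The main work is property (i): I would argue that within a single connected component $H_i$, the hyperedges, restricted to $V_i$, are totally ordered by inclusion, so the largest one equals $V_i$. Combining the laminar-pair property (iii) across all pairs of local roots with the intra-root disjointness noted above, for any two hyperedges $E, E'$ appearing in $H_i$ we have $E \cap V_i \subseteq E' \cap V_i$, or $\supseteq$, or they are disjoint; but two hyperedges in the \emph{same} connected component cannot have disjoint restrictions to $V_i$ unless they are linked through a chain --- and here I would run an induction on the length of a connecting chain of hyperedges to show that, in fact, no two hyperedges touching $V_i$ can be disjoint once restricted to $V_i$ (a disjointness would split $V_i$ into two pieces with no edge crossing, contradicting connectedness, unless some third edge bridges them, but a bridging edge would have to be comparable to both, forcing the two pieces to be nested after all). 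Hence the restricted hyperedges are linearly ordered, and the maximum is $V_i$, which is therefore itself a hyperedge $\mathcal{E}_i$.

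The main obstacle I anticipate is making the argument for (i) fully rigorous: "connected component plus pairwise-laminar-or-disjoint edges implies linearly ordered" needs a careful induction, because one has to rule out the configuration where $H_i$ is held together by a long chain of edges each overlapping the next but with the extremes disjoint. The resolution is that laminarity forces overlapping edges to be nested, and nestedness is transitive, so a chain of overlaps collapses to a containment chain; I would state this as a small self-contained sub-claim ("a connected hypergraph whose edges are pairwise nested-or-disjoint has all its edges nested, hence a maximum edge covering all vertices") and prove it by induction on the number of edges. Everything else is bookkeeping with the definitions of connector shadow, child shadow, and the diamond-free property.
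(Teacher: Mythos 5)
Your proposal is correct and follows essentially the same route as the paper's proof: both rest on the observation that the hyperedges of $H$ form a laminar family (disjointness of shadows of siblings via diamond-freeness, plus the untangled hypothesis across distinct local roots), derive (i) by showing the maximal hyperedge of a connected component must cover it, and obtain (iii) because laminarity of a pair of families is inherited by subfamilies. The only cosmetic differences are that you prove (ii) directly from the fact that connected components partition the covered vertices (the paper instead routes it through (i) and maximality) and that you spell out the chain-collapse induction for (i) which the paper compresses into a one-line maximality argument; neither changes the substance.
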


Statement $(iii)$ ensures that this lemma continues to hold in the subproblems $\Gamma'$ and $\Gamma''$. 
A proof of Lemma \ref{l:hypergraph-props} can be found in Appendix \ref{a:omitted-proofs}.

It remains to show that \emph{any} $k$-multitree may be decomposed according to this procedure. The proof we present here focuses on the more involved MERGE case and only sketches the argument for the INCLUDE case. A full proof appears in Appendix \ref{a:omitted-proofs}.

\begin{theorem}\label{t:k-multitree-structure-theorem}
	Any untangled $k$-multitree $M = (V,E)$ can be decomposed into a decomposition tree $\tau$ in which: 
	\begin{enumerate}[1)]
		\item all leaves of $\tau$ are associated either with base or trivial subproblems and,
		\item at each internal node $u \in V$, one of the UP, OUT, INCLUDE, or MERGE cases can be applied to the subproblem $\Gamma_u$ to obtain the subproblems associated with the children of $u$.
	\end{enumerate}
\end{theorem}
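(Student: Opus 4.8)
The plan is to induct on $|\Gamma_u|$, growing $\tau$ from its root downward. The root of $\tau$ is the subproblem $\Gamma_{\mathrm{root}}=V$, which is admissible by Definition~\ref{d:admissible} and has exactly $k$ local roots. For the inductive step I take an admissible subproblem $\Gamma_u$ with exactly $k$ local roots that is neither base (Definition~\ref{d:k-multitree-base-subproblem}) nor trivial (Definition~\ref{d:trivial}) --- so $M[\Gamma_u]$ is a $k$-multitree containing at least one leaf of $M$ --- and establish two claims: (a) at least one of UP, OUT, INCLUDE, MERGE applies to $\Gamma_u$; and (b) whichever case applies produces exactly two child subproblems, each of which is base, trivial, or admissible, each with strictly fewer vertices than $\Gamma_u$, and with every admissible child again having exactly $k$ local roots. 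Claims (a)--(b) let me hang the two children on $u$, recurse into the admissible ones, and stop at base/trivial leaves; the strict decrease of $|\Gamma_u|$ forces termination, so $\tau$ is finite, full binary, and satisfies both stated properties.

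The root-count invariant in (b) is quick to check: UP and OUT delete a root's child, resp.\ a subtree, without changing the number of roots; INCLUDE deletes a set $Q$ of roots but turns their common child $c$ into a new root, and --- precisely because INCLUDE requires that \emph{every} parent of $c$ lie in $Q$ --- the remainder $\Gamma_u\setminus Q$ loses $|Q|-1$ roots, so it has exactly $k$ roots when $|Q|=1$ and fewer otherwise, in which case it is base and becomes a leaf; in MERGE a local root is duplicated into $\Gamma'$ and/or $\Gamma''$ only when its children are split, so each of $\Gamma',\Gamma''$ has at most $k$ roots and a part with fewer than $k$ is base. The easy cases also immediately meet (b): child-descendant completeness of $\Gamma_u$ is exactly what makes $\Gamma_u\setminus\{r\}$ (UP), resp.\ $\Gamma_u$ minus the subtree below $c$ (OUT), child-descendant complete; connector completeness survives because neither operation separates a connector from any of its parents; $\{r\}$ is trivial; and the subtree below a child $c$ with $\mathit{Sh}(c)=\varnothing$ is a $1$-multitree, hence base.

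The core is the case analysis for (a) together with the correctness of MERGE. If UP fails, then every local root with a single child has a \emph{connector} as that child; if OUT also fails, then every child of every local root has a nonempty connector shadow; the key step is that if INCLUDE additionally fails then no local root has exactly one child (and, modulo a small separate argument, none has zero children), so every local root has at least two children and MERGE's hypothesis holds. I would prove this using diamond-freeness repeatedly --- most importantly that two children of a common node have disjoint connector shadows (a common descendant would make a $4$-node diamond, and a sibling reachable from a sibling a $3$-node diamond) --- together with connector completeness to show that a local root whose unique child $c$ is a connector forces every parent of $c$ to be a local root having $c$ as its sole child, which is precisely the configuration INCLUDE peels off. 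Once in the MERGE case, I form the hypergraph $H$ of \eqref{eq:H-def}, apply Lemma~\ref{l:hypergraph-props} to split it into maximal connected components $H_1,\dots,H_t$, and note $t\ge 2$ (a single component would, by Lemma~\ref{l:hypergraph-props}(i), be covered by one hyperedge $\mathit{Sh}(c^\ast)$, forcing the local root of $c^\ast$ to have no other child --- its shadow would be nonempty and disjoint from $\mathit{Sh}(c^\ast)=\kappa(\Gamma_u)$ --- contradicting that every local root has at least two children). I then split the components, hence the children of the local roots (each carrying its full descendant subtree), into two nonempty classes $\Gamma',\Gamma''$: property (ii) says no connector straddles the cut, so $\Gamma'$ and $\Gamma''$ are connector complete; moving whole subtrees preserves child-descendant completeness; and property (iii) preserves the laminar-pair/untangled condition, so $\Gamma'$ and $\Gamma''$ are admissible untangled multitrees, each smaller than $\Gamma_u$ because it omits the nonempty subtrees sent to the other.

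I expect the main obstacle to be exactly this exhaustiveness claim --- showing that the failure of UP, OUT, and INCLUDE leaves MERGE not merely applicable but \emph{productive}, i.e.\ that $H$ cannot be a single component (equivalently, that no child's shadow contains every connector of $\Gamma_u$ unless an earlier case already fired). This is the step that needs the most careful diamond-freeness bookkeeping, including an honest treatment of ancestors of connectors lying outside $\Gamma_u$, and it is where connector completeness and Lemma~\ref{l:hypergraph-props} are indispensable. The remaining verifications --- that each case branches into exactly two children, that admissibility and the $k$-root invariant persist, and that the vertex count strictly drops --- are routine once the structural facts about $\Gamma_u$ and $H$ are in hand.
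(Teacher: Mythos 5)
Your overall architecture (induct on $|\Gamma_u|$, show one case always fires, show each case yields two strictly smaller base/trivial/admissible children, conclude termination) matches the paper's, and you correctly identify the crux: exhaustiveness, i.e., that when UP, OUT and INCLUDE all fail, MERGE is applicable and productive ($t\ge 2$). But the bridge you propose for that crux is broken. You claim that if UP and OUT fail and some local root $r$ has a unique child $c$ (necessarily a connector), then connector completeness plus diamond-freeness force \emph{every} parent of $c$ to be a local root with $c$ as its sole child, so that INCLUDE fires; hence failure of INCLUDE would leave every local root with at least two children, which is exactly the premise your $t\ge2$ argument consumes. That implication is false: a connector child of a root may have an additional parent that is an \emph{internal} vertex of $\Gamma_u$. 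Take roots $r_1,r_2,r_3$ with edges $r_1\to c$, $r_2\to p\to c$, $r_2\to p'\to d$, $r_3\to d$, with leaves below $c$ and $d$. This is an untangled multitree; UP fails ($c$ and $d$ are connectors), OUT fails (every root-child has a connector in its subtree), $r_1$ and $r_3$ each have exactly one child, yet INCLUDE cannot fire at $c$ or $d$ because the parent $p$ (resp.\ $p'$) is not a local root --- and no diamond or cycle arises from this configuration alone, so the contradiction you sketch never materializes. Your ``every local root has at least two children'' lemma is therefore unavailable, and with it your argument that a single hypergraph component is impossible.

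The implication you need is true, but the paper reaches it by a different split which you will have to adopt: after UP and OUT fail, let $c_{\max}$ be the root-child with the most connectors below it, and case on whether \emph{all} connectors of $\Gamma_u$ lie below $c_{\max}$. Only under that hypothesis does the diamond argument work: a non-root parent of $c_{\max}$ has a root ancestor $a$ whose path-child $x$ toward $c_{\max}$ is not a connector (else a cycle), so $a$ has a second child $x'$ (UP failed) with a connector descendant $y$ (OUT failed), and $y$ lies below $c_{\max}$ by hypothesis, giving the diamond $a\rightsquigarrow c_{\max}\rightsquigarrow y$, $a\rightsquigarrow x'\rightsquigarrow y$; this forces every parent of $c_{\max}$ to be a root with $c_{\max}$ as sole child, so INCLUDE applies. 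In the complementary case some connector is not below $c_{\max}$, and a single component is ruled out directly: by Lemma~\ref{l:hypergraph-props}(i) it would be covered by one child's shadow, and that child would have to be $c_{\max}$. Note also that MERGE as stated requires only that \emph{some} root have two or more children (my example above lands in MERGE with two single-child roots), and that in the paper's MERGE both $\Gamma'$ and $\Gamma''$ retain \emph{all} of $R(\Gamma_u)$ rather than dropping roots whose children all went to the other side, as you suggest. Your remaining verifications (admissibility in UP/OUT, connector completeness of $\Gamma',\Gamma''$ via maximality of the components and untangledness, termination) are in line with the paper.
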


\begin{proof}
	Given an untangled $k$-multitree $M=(V,E)$, we first note that $V$ is an admissible subproblem of $G$. We proceed to show that if $\Gamma_u$ is a non-base admissible subproblem of $M$, that $\Gamma_u$ can be decomposed into at most two admissible subproblems of $G$. Since $G$ is finite, this process cannot proceed indefinitely, and thus must terminate, yielding $\tau$.
	
	If any local root $r \in R(\Gamma_u)$ has a single child which is not a connector, the UP case can be applied to yield subproblem $\Gamma_u \setminus \{r\}$. This is easily seen to be an admissible subproblem, since the child of $r$ is not a connector and $\Gamma_u$ is child descendant complete. 
	
	If some root has a child $c$ with no connectors as descendants, the OUT case can be applied as follows. The set $D$ containing $c$ and all $c$'s descendants forms a base subproblem. Thus, $\Gamma_u \setminus D$ is easily seen to be admissible.
	
	If neither the UP nor OUT case can be applied, it is clear that \begin{inparaenum}[1)] \item if any local root of $\Gamma_u$ has only a single child, it must be a connector, and \item every local root has at least one connector as a descendant. \end{inparaenum}
	Then let $c_{max}$ be the child with the maximum number of connectors as descendants. We split into two cases.
	
	{\setlength{\parindent}{0em}
		\textbf{Case 1)} Every connector in $\Gamma_u$ is a descendant of $c_{max}$.
	}	
	
	We can argue that each parent of $c_{max}$ is a local root of $\Gamma_u$ since otherwise, we can exhibit a cycle or a diamond, contradicting that $M$ is a multitree (see Appendix \ref{a:omitted-proofs}). 	
	Moreover, $c_{max}$ must have in-degree strictly greater than 1. Otherwise, it has only one parent, which implies that the UP case could be applied (a contradiction). Since the UP case cannot be applied, if $c_{max}$ has only one parent then $c_{max}$ must be a connector, which implies that $c_{max}$ has in-degree strictly greater than 1, as required.
	
	Let $Q \subseteq R(\Gamma_u)$ be the subset of local roots which are parents of $c_{max}$. Then $Q$ is a trivial subproblem while $\Gamma_u \setminus Q$ is easily seen to be an admissible subproblem on which the INCLUDE case may be applied.
	
	{\setlength{\parindent}{0em}
		\textbf{Case 2)} Some connector in $\Gamma_u$ is not a descendant of $c_{max}$.
	}
	
	In this case we apply the MERGE case by forming the hypergraph $H$ as defined in \eqref{eq:H-def}. By Lemma \ref{l:hypergraph-props}, we can form maximal connected components $H_1,...,H_t$ where $H_i = (C_i, \mathcal{E}_i)$, with $C_i \cap C_j = \emptyset$ for all $i \neq j$. To apply the MERGE case we require at least two maximal connected components, which we argue as follows.
	
	Suppose there is a single maximal connected component, $H_1 = (C_1, \mathcal{E}_1)$. By Lemma \ref{l:hypergraph-props}(i) $C_1$ is a hyperedge, which implies that there must be some child of $R(\Gamma_u)$ which covers all connectors of $\Gamma_u$. But this child must be $c_{max}$, which contradicts that some connector is \emph{not} a descendant of $c_{max}$.
	
	We can then form two admissible subproblems $\Gamma'$ and $\Gamma''$ as follows. For each local root $r \in R(\Gamma_u)$, let $X_r$ be the set of children of $r$, and let 
	\[		X_r'  := \{u \in X_r : Sh(u) \in \mathcal{E}_1 \}, ~~~~~
	X_r'' := \{u \in X_r : Sh(u) \in \mathcal{E}_2 \cup ... \cup \mathcal{E}_t \}.
	\]
	As before, since each child has at least one connector, each child is in one of $X_r'$ or $X_r''$ for some $r \in R(\Gamma_u)$.
	We form $\Gamma'$ and $\Gamma''$ as follows
	\begin{align*}
	\Gamma' &:= \{u \in \Gamma_u : u \text{ is a descendant of a node in } \bigcup_{r \in R(\Gamma_u)} X_r' \} && \hspace{-2em}\cup R(\Gamma_u);\\
	\Gamma'' &:= \{u \in \Gamma_u :  u \text{ is a descendant of a node in } \bigcup_{r \in R(\Gamma_u)} X_r'' \} &&\hspace{-2em}\cup R(\Gamma_u). 
	\end{align*}
	
	We must show that each of $\Gamma'$ and $\Gamma''$ is an admissible subproblem. Both $\Gamma'$ and $\Gamma''$ are clearly child-descendant complete, having been formed by taking all descendants of a set of children of each root.
	
	To see that $\Gamma'$ is connector complete, we will examine an arbitrary connector $c \in \Gamma'$.
	
	Since $c \in \Gamma'$, $c \in C_1$, and by Lemma \ref{l:hypergraph-props}(i), $C_1 \in \mathcal{E}_1$, which implies that there must be some node $v \in \Gamma_u$ which is a child of a local root of $\Gamma_u$ such that $Sh(v) = C_1$. Let $r \in R(\Gamma_u)$ be the local root which is a parent of $v$. Since $c$ is a connector, it must have at least two local roots as ancestors. Then let $r' \in R(\Gamma_u)$ be an arbitrary local root which is an ancestor of $c$ such that $r \neq r'$. Let $w$ be the child on the path from $r'$ to $c$. Since $M$ is untangled, and $(C_1, \mathcal{E}_1)$ is a \emph{maximal} connected component, we must have that $Sh(w) \subseteq Sh(v)$. Thus both $v$ and $w$ are in the set $\bigcup_{r \in R(\Gamma_u)} X_r'$, which implies that all of $v$ and $w$'s descendants are in $\Gamma'$, including $c$ and the two of $c$'s parents which are descendants of $v$ and $w$. Moreover, since $r'$ was chosen arbitrarily, this argument can be repeated for all $r' \in R(\Gamma_u)$ such that $r \neq r'$ to show that every parent of $c$ is contained in $\Gamma'$.
	
	A similar argument shows that $\Gamma''$ is connector complete, ending Case 2.

	Finally, the decomposition terminates since each subproblem created by this process is \emph{strictly smaller} than the subproblem from which it was formed. \qed
\end{proof}

\section{Optimizing LSP Over a Decomposition Tree}

Once the decomposition tree $\tau$ is formed via the procedure from the prior section, we can apply a recurrence bottom-up to solve $k$-LSP.

Let $\Gamma_u$ be a subproblem in decomposition tree $\tau$ which has local roots denoted by $q_1,...,q_k$. To each placement $P$ on the leaves of $M[\Gamma_u]$ we associate an \emph{ancestry signature}: a $k$-tuple in $\mathbb{N}^k$ whose $i^{th}$ entry contains the number of replicas of $P$ which have $q_i$ as an ancestor. We denote the ancestry signature of $P$ by $\vec{\alpha}(P) = \langle \alpha_1, ..., \alpha_k\rangle$.

We use the ancestry signature to index our DP recurrence, along with the number of replicas placed on a given node. We use the $F(\Gamma_u, r, \vec{\alpha})$ to denote the lexico-minimum failure aggregate obtained by any placement on the leaves of $M[\Gamma_u]$ which has size $r$ and ancestry signature equal to $\vec{\alpha}$. Since they store failure aggregates, values of $F$ are non-negative integer vectors of size $\rho + 1$. We set $F(\Gamma_u, r, \vec{\alpha}) = \infty$ when $\Gamma_u$ is a trivial subproblem, or when $M[\Gamma_u]$ does not admit any placement of size $r$ with ancestry signature $\vec{\alpha}$. We consider $\infty$ to be lexicographically larger than any vector.

Our goal is to describe $F(\Gamma_u, r, \vec{\alpha})$ in terms of values of $F$ taken the children of $u$ in subproblem tree $\tau$. Let $u$ have children $v$ and $w$. The DP recurrence we present has four cases depending on the case which was applied to $u$ to obtain $v$ and $w$. Each case of the recurrence is a sum of terms involving $\Gamma_v$ and $\Gamma_w$ along with a correction factor. This correction factor increments or decrements the number of nodes with a given failure number. Incrementing or decrementing the number of nodes with failure number $i$, is achieved by adding or subtracting $\vec{e}(i) = \langle 0,...,0,1,0,...,0\rangle$ where the 1 appears in the $(\rho - i)^{th}$ index. As we shall see, the only nodes whose failure numbers must be corrected are the local roots of subproblem $\Gamma_u$.

In the UP case, the value of $F(\Gamma_u, r, \vec{\alpha})$ must be updated to include the failure number of the new local root $q_i$. This is achieved by adding $\vec{e}(\alpha_i)$, yielding:
\[
F(\Gamma_u, r, \vec{\alpha}) = F(\Gamma_v, r, \vec{\alpha}) + \vec{e}(\alpha_i)
~~~~~~~~~ \text{(UP at root $q_i$)}.
\]

\begin{figure}[tb]
	\centering
	\begin{subfigure}{0.48\linewidth}
		\centering
		\begin{tikzpicture}[every node/.style={minimum height=2mm, draw, circle, fill=white}]

\tikzstyle{skirtnode} = [regular polygon, regular polygon sides=3,draw, minimum height=30mm, fill=none]

	\node [label=180:{$q_1$}](a) {};	
	\node [label=35:{$q_i$},right=2cm of a] (b) {};	
	\node [label=0:{$q_k$},right=of b](c) {};	
	
	\node [left=8mm of b, yshift=-3mm, minimum height=1.5mm] (out) {};
	
	\node [draw=none,fill=none] at ($(a)!0.5!(b)$) {$\dots$};
	\node [draw=none,fill=none] at ($(b)!0.5!(c)$) {$\dots$};
	
	\draw [->, >=stealth'] (b) -- (out);

	\begin{scope}[on background layer]
		\node[skirtnode, below=-2mm of a] (askirt) {};
		\node[skirtnode, below=-2mm of out, minimum height=10mm] (outskirt) {};
		\node[skirtnode, below=-2mm of b] (bskirt) {};
		\node[skirtnode, below=-2mm of c] (cskirt) {};
	\end{scope}
	
	\draw[dotted] 
		([shift={(-2mm,2mm)}]a.north west)  to[curve through={
		([shift={(1mm,1mm)}]a.north east)..
		([shift={(-1mm,-1mm)}]outskirt.south west)..
		([shift={(1mm,-1mm)}]outskirt.south east)..
		([shift={(-1mm,1mm)}]b.north west)..
		([shift={(0,2mm)}]b.north)..
		([shift={(2mm,2mm)}]c.north east)..
		([shift={(3mm,-1mm)}]cskirt.south east) ..
		([shift={(0mm,-1mm)}]bskirt.south)..
		([shift={(-3mm, -1mm)}]askirt.south west)  }] 
		([shift={(-2mm,2mm)}]a.north west);
		
	\draw [decorate, decoration={brace, amplitude=5pt}] ($(bskirt.south east)+(4mm,-1mm)$) -- ($(bskirt.south west)+(-4mm,-1mm)$) node [midway, fill=none, draw=none, yshift=-4mm] {\small $\alpha_i'$ replicas};
	
	\node [below=1.25mm of out, fill=none, draw=none] {$x$};
	
	\node [below=1.5mm of c, xshift=9mm, draw=none, fill=none] {$\Gamma_v$};
\end{tikzpicture}
	\end{subfigure}
	\hfill
	\begin{subfigure}{0.48\linewidth}
		\centering
		\begin{tikzpicture}[every node/.style={minimum height=2mm, draw, circle, fill=white}]

\tikzstyle{skirtnode} = [regular polygon, regular polygon sides=3,draw, minimum height=20mm, fill=none]

	\node [label=180:{$s_1$}](a) {};	
\node [right=of a] (b) {};	
\node [label=0:{$s_j$},right=of b](c) {};	

\node [draw=none] at ($(a)!0.5!(b)$) {$\dots$};
\node [draw=none] at ($(b)!0.5!(c)$) {$\dots$};

\node [above left=5mm of b, yshift=3mm] (d) {};
\node [above right=5mm of b, yshift=3mm] (e) {};

\node [draw=none] at ($(d)!0.5!(e)$) {$\dots$};

\draw (d) edge[->, >=stealth'] (b);
\draw (e) edge[->, >=stealth'] (b);

\begin{scope}[on background layer]
\node[skirtnode, below=-2mm of a] (askirt) {};
\node[skirtnode, below=-2mm of b] (bskirt) {};
\node[skirtnode, below=-2mm of c] (cskirt) {};
\end{scope}

\draw[dotted] 
([shift={(-2mm,2mm)}]a.north west)  to[curve through={
	([shift={(0,2mm)}]b.north)..
	([shift={(2mm,2mm)}]c.north east)..
	([shift={(3mm,-1mm)}]cskirt.south east) ..
	([shift={(0mm,-1mm)}]bskirt.south)..
	([shift={(-3mm, -1mm)}]askirt.south west)  }] 
([shift={(-2mm,2mm)}]a.north west);

\draw [decorate, decoration={brace, amplitude=5pt}] ($(bskirt.south east)+(4mm,-1mm)$) -- ($(bskirt.south west)+(-4mm,-1mm)$) node [midway, fill=none, draw=none, yshift=-4mm] {\small$\beta_\ell$ replicas};

\draw [decorate, decoration={brace, amplitude=5pt}] ($(d.north west)+(0mm,1mm)$) -- ($(e.north east)+(0mm,1mm)$) node [midway, fill=none, draw=none, yshift=4mm] {$Q$};

\node [below=1.5mm of c, xshift=9mm, draw=none, fill=none] {$\Gamma_v$};
\end{tikzpicture}
	\end{subfigure}
	\hfill
	\vspace{-1.25cm}
	\caption{Left: schematic for the OUT case; right: schematic for the INCLUDE case. Dotted lines surround $\Gamma_v$ in both cases.}\label{f:schematics-up-out}
\end{figure}
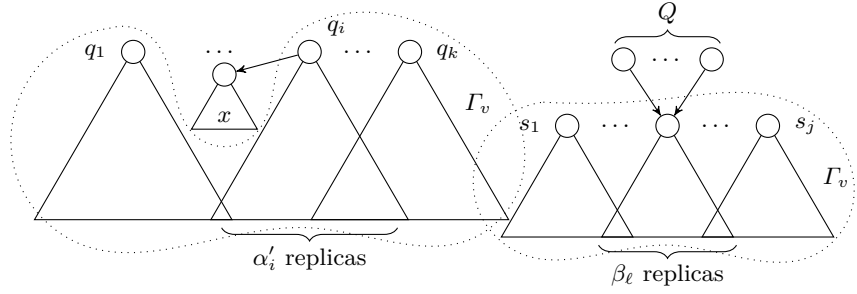

Consider next the OUT case at local root $q_i$ (see \autoref{f:schematics-up-out}). Allow $\Gamma_w$ to represent the subproblem with no connectors and recall that $M[\Gamma_w]$ forms a tree. Thus, we may use the algorithm for trees developed previously \cite{Mills2015} to find $\vec{T}(\Gamma_w, x)$ the lexico-minimum failure aggregate attainable in $M[\Gamma_w]$ using $x$. To attain the optimal value overall, we take the minimum over all possible ways to split replicas which are descendants of $q_i$ among leaves of $M[\Gamma_w]$ and $M[\Gamma_v]$.
\[
F(\Gamma_u, r, \vec{\alpha}) =
\displaystyle\min_{\substack{\alpha_i' + x = \alpha_i\\r'+x=r}}\Big[F(\Gamma_v, r', \vec{\alpha}') + \vec{T}(\Gamma_w, x) + \vec{e}(\alpha_i) - \vec{e}(\alpha_i') \Big] 
~~~~~~~~~ \text{(OUT at root $q_i$)}.
\]
where $\vec{\alpha}' := \langle \alpha_1, ...,\alpha_{i-1}, \alpha_i', \alpha_{i+1}, ..., \alpha_k\rangle$. The corrective factor of $\vec{e}(\alpha_i) - \vec{e}(\alpha_i')$ adjusts the failure number of root $q_i$ from its previous value of $\alpha_i'$ (which is included from $F(\Gamma_v, r', \vec{\alpha}')$) to its new value of $\alpha_i$.

In the MERGE case we consider subproblems $\Gamma_v$ and $\Gamma_w$ which share only the $k$ local roots among them. Thus, as in the previous case, the leaves of $M[\Gamma_v]$ and $M[\Gamma_w]$ are disjoint. Taking the lexico-minimum over all ways to split the ancestry signature $\vec{\alpha}$ into $\vec{\alpha}'$ and $\vec{\alpha}''$ yields the optimal value overall, as shown below.
\[
F(\Gamma_u, r, \vec{\alpha}) =
\displaystyle\min_{\substack{\vec{\alpha}' + \vec{\alpha}'' = \vec{\alpha}\\r' + r'' = r}}\Big[F(\Gamma_v, r', \vec{\alpha}') + F(\Gamma_w, r'', \vec{\alpha}'') + correct_k(\vec{\alpha'}, \vec{\alpha}'')\Big]
~~~~~~~~~  \text{(MERGE)}
\]
where the corrective factor $correct_k(\vec{\alpha}',\vec{\alpha}'') := \sum_{i=1}^k \vec{e}(\alpha_i) - \vec{e}(\alpha_i') - \vec{e}(\alpha_i'')$
for \mbox{$\vec{\alpha}' = \langle \alpha_1',...,\alpha_k' \rangle$} and $\vec{\alpha}'' = \langle \alpha_1'',...,\alpha_k'' \rangle$. The $i^{th}$ term in the corrective factor adjusts the failure number of root $q_i$ by replacing the contributions of $\vec{e}(\alpha_i')$ and $\vec{e}(\alpha_i'')$ (which were included from $F(\Gamma_v, r', \vec{\alpha}')$ and $F(\Gamma_w, r'', \vec{\alpha}'')$ respectively) with the corrected value of $\vec{e}(\alpha_i)$.

The INCLUDE case requires special consideration since $\Gamma_v$ has strictly fewer local roots than $\Gamma_u$. Thus placements on the leaves of $M[\Gamma_v]$ will have ancestry signatures with length $j$, whereas the parent subproblem $\Gamma_u$ requires ancestry signatures of length $k$. These signatures will need to be appropriately mapped onto one another. Moreover, not all values of $\vec{\alpha}$ are valid as ancestry signatures of $\Gamma_u$, since local roots in $Q$ must all \emph{share} the same failure number (see \autoref{f:schematics-up-out}). Thus, our recurrence will only be computed at values of  $\vec{\alpha}$ for which this is true. To address these details, we employ a mapping $\vec{h} : \mathbb{N}^j \to \mathbb{N}^k$ which maps ancestry signatures of $\Gamma_v$ to their corresponding signature in $\Gamma_u$. A formal definition of $\vec{h}$ can be found in Appendix \ref{a:omitted-proofs}.

With the mapping $\vec{h}$ in hand we can describe the optimal value of $F(\Gamma_u, r, \vec{\alpha}(\vec{\beta}))$ as follows. Let $\Gamma_v$ be the base subproblem which forms a $j$-multitree, and which has local roots $s_1,...,s_j$. Moreover, $\Gamma_v$ has a distinguished local root, $s_\ell$, whose parents all lie in the set $Q \subseteq \{q_1,...,q_k\}$. Given values of $F(\Gamma_v, r, \vec{\beta})$, we can compute the recurrence as follows
\[
F(\Gamma_u, r, \vec{h}(\vec{\beta})) =
F(\Gamma_v, r, \vec{\beta}) + |Q|\cdot\vec{e}(\beta_\ell) ~~~~~~~~~ \text{(INCLUDE where $s_\ell$ has parents in $Q$)}
\]
In the above equation, the term $|Q|\cdot\vec{e}(\beta_\ell)$ corrects for the addition of all $|Q|$ local roots in $Q$. Each such local root will have a failure number matching that of $s_\ell$. For all values of $\vec{\alpha}$ which do not match $\vec{h}(\vec{\beta})$ for some $\vec{\beta}$, we set $F(\Gamma_u, r, \vec{\alpha})) = \infty$.

\section{Time Analysis and Conclusion}

In both phases, the time required to compute the MERGE case dominates the remaining cases. To bound the time taken to run the decomposition phase, notice that the number of edges in any $k$-multitree is no more than $kn$, where $|V| = n$. Thus, the size of a description of the connector-shadow hypergraph $H$ may be no more than $O(kn)$, and therefore maximal connected components of $H$ may be found in $O(\alpha(kn)kn)$ time per application of the MERGE case. Since each application of a MERGE separates at least one connector from the rest, there may only be $O(c)$ MERGE cases, where $c$ is the number of connectors in $M$.

For the optimization phase, $O(\rho^k)$ is an upper bound on both a) the number of ways to split an ancestry signature $\vec{\alpha}$ into $\vec{\alpha}'$ and $\vec{\alpha}''$ and b) the number of values of $\vec{\alpha}$ for which $F(\Gamma_u, r, \vec{\alpha})$ must be computed. Moreover, there are $O(\rho)$ values of $r$, $O(\rho)$ ways to split values of $r$ into $r'$ and $x$, and an additional factor of $O(\rho)$ must be included for summing vector values of $F$. Overall, any MERGE phase is bounded by $O(n\rho^{2k+3})$, since each subproblem is split into two strictly smaller subproblems at each step, and this may be done only $n$ times. Notice that base subproblems considered in the INCLUDE case have strictly less than $k$ roots, so their running times are each bounded by $O(n\rho^{2j+3})$ where $j < k$. Since in practice $c$ may be either $O(n)$ or $o(\rho^{2k+3})$, we report the total running time as $O(n\rho^{2k+3} + \alpha(kn)ckn)$. A looser, somewhat snappier bound is $O(n^2\rho^{2k+3})$. Either bound suffices to establish fixed-parameter tractability of \emph{untangled} $k$-LSP.

At the end of Section 4 we briefly described how an optimal placement algorithm for untangled $k$-multitrees suffices to solve the problem in \emph{canonical placement models} and thus in DAGs. However, in the general case, the number of roots may be large, making optimization prohibitively expensive. Thus, a procedure for minimizing the number of roots in a canonical placement model would be a useful future contribution. Other directions for future work include approximation algorithms and algorithms based upon alternative parameterizations, particularly output-sensitive parameterizations based upon the failure aggregate.


\bibliographystyle{splncs03}
\bibliography{cocoa2017}


\appendix
\section{Omitted / Truncated Proofs}\label{a:omitted-proofs}

\begin{proof}[Proof of Theorem \ref{t:3-multitree-np-hard} (cont.)]
	We complete the reduction by showing that $H$ has a placement $P\subseteq H(V)$ with $|P| = k$ for which $\vec{f}(P) \leq_L \langle 3,0,...,0,\infty,\infty \rangle$ if and only if $G$ has an independent set of size $k$.
	
	$``\implies"$ Suppose $H$ has a placement $P \subseteq H(V)$ with $|P|=k$ and $\vec{f}(P) \leq \langle 3,0,...,0,\infty,\infty \rangle$. Nodes $\alpha, \beta$ and $\gamma$ each have failure number $k$, since every node in $P$ has each of $\alpha,\beta$ and $\gamma$ as an ancestor. Thus, the upper bound on $\vec{f}(P)$ implies that all other nodes in $H$ have a failure number of at most 1. Thus, no node of $H(E)$ has failure number 2, which further implies that $P$ is a subset of $k$ nodes of $H(V)$ such that no node in $H(E)$ is connected to two or more nodes of $P$. Thus, every node in $H(E)$ is connected to at most one node of $P$, which implies that no two nodes of $P$ are adjacent as vertices of $G$. Thus, $P$ corresponds to an independent set of size $|P|=k$ in $G$.
	
	$``\impliedby"$ Suppose instead that $G$ has an independent set $I$ of size $k$. Then $I$ corresponds to a subset $P \subseteq H(V)$ of size $k$ in which no two vertices of $P$ are adjacent to the same node in $H(E)$. But this implies that every node in $H(E)$ has a failure number of at most 1. Moreover, no vertex in $H(V)$ can have failure number greater than $1$, and, as we have shown, each node $\alpha, \beta$ and $\gamma$ has failure number exactly $k$. Therefore, $\vec{f}(P) \leq_L \langle 3,0,...,0,\infty,\infty \rangle$, and so $P \subseteq H(V)$ is a placement of size $k$ with the required upper bound on $\vec{f}(P)$. \qed
\end{proof} 
For convenience, Lemma \ref{l:hypergraph-props} is restated below.

\begin{numberedlemma}{\ref{l:hypergraph-props}}
	Given an admissible subproblem $\Gamma_u$, the hypergraph $H$ defined via
	\[H := \Big(\kappa(\Gamma_u), \bigcup_{r \in R(\Gamma_u)} \mathcal{C}(r) \Big)\] 
	may be decomposed into maximal connected components $H_1 = (V_1, \mathcal{E}_1), ..., H_t = (V_t, \mathcal{E}_t)$ for which the following properties hold.
	\begin{enumerate}[i)]
		\item for all $i$, $V_i \in \mathcal{E}_i$, (i.e. each maximal connected component is covered by a single edge.)
		\item for all $i \neq j$, $V_i \cap V_j = \emptyset$, (i.e. no connector lies in two maximal connected components.)
		\item for all $r, r' \in R(\Gamma_u)$ and $i \in {1,...,t}$: $\mathcal{C}(r) \cap \mathcal{E}_i$ and $\mathcal{C}(r') \cap \mathcal{E}_i$ form a laminar pair. 
	\end{enumerate}
\end{numberedlemma}
\begin{proof}[Proof of Lemma \ref{l:hypergraph-props}]
	Recall from the main body of the paper that for each $r \in R(\Gamma_u)$, any two hyperedges $X,Y \in \mathcal{C}(r)$ are disjoint. Moreover, since $M[\Gamma_u]$ is an untangled multitree, $\mathcal{C}(r)$ and $\mathcal{C}(r')$ form a laminar pair by definition. Each property may be proven as follows.
	\begin{enumerate}[(i):]
		\item Suppose that $V_i \notin \mathcal{E}_i$. Let $E$ be the largest hyperedge of $\mathcal{E}_i$, and note that $|E| < |V_i|$, since otherwise $E = V_i$. Since $H_i$ is connected, the vertices in $V_i \setminus E$ must be reachable from the vertices in $E$. These vertices can only be reached via a hyperedge of $H_i$, and since $\mathcal{E}_i$ is laminar, this vertex must entirely contain $E$, thereby contradicting that $E$ is the largest hyperedge of $H_i$.
		\item Suppose that $V_i \cap V_j \neq \emptyset$ for some $i \neq j$. Since $V_i$ is an edge of $H_i$ it must also be a hyperedge of $H$, and likewise for $V_j$. The hyperedges of $H$ are easily seen to be laminar. Thus all hyperedges of $H$ are either disjoint, or one is a subset of another. If $V_i \subseteq V_j$, then since $H_i$ is a \emph{maximal} connected component, we must have the $V_j \in \mathcal{E}_i$, and thus $V_j$ is part of the same connected component of as $V_i$, implying $i=j$, a contradiction.
		\item Since $\mathcal{E}_i$ and $\mathcal{E}_j$ form a laminar pair, and any pair of subsets of a laminar pair forms a laminar pair, in particular $\mathcal{C}(r) \cap \mathcal{E}_k$ and $\mathcal{C}(r') \cap \mathcal{E}_k$ form a laminar pair. \qed
	\end{enumerate}
\end{proof}

In the proof of Theorem \ref{t:k-multitree-structure-theorem}, we argued in Case 1 that each parent of $c_{max}$ must be a local root of $\Gamma_u$, since otherwise, we can exhibit a cycle or a diamond, both of which are forbidden structures. We now provide the justification for this claim.

\begin{proof}[Proof of Theorem \ref{t:k-multitree-structure-theorem} (cont.)]
	
	We claim that the choices for parent of $c_{max}$ are limited to nodes in $R(\Gamma_u)$. Suppose instead that some other node $v \notin R(\Gamma_u)$ is a parent of $c_{max}$. Then $v$ must have an ancestor $a \in R(\Gamma_u)$. Let $x$ be the child of $a$ on the path from $a$ to $c_{max}$. It is clear that $x$ is not a connector, as we now show. If $x$ is a connector then there exists a cycle from $c_{max} \rightsquigarrow x \rightsquigarrow c_{max}$, contradicting that $M$ is acyclic.
	
	So $x$ is not a connector. Thus, since the UP case could not be applied, root $a$ must have multiple children. Let $x' \neq x$ be another of the children of $a$. Since the OUT case(s) could not be applied, $x'$ must have a connector $y$ as a descendant. But since in Case 1 all connectors are descendants of $c_{max}$ this forms a diamond from $a \rightsquigarrow c_{max} \rightsquigarrow y$ and $a \rightsquigarrow x' \rightsquigarrow y$. Thus choices for parents of $c_{max}$ are limited to nodes in $R(\Gamma_u)$ as claimed. \qed
\end{proof}

\subsection{Mapping Ancestry Signatures in the INCLUDE Case}

Let $\Gamma_v$ be the base subproblem which forms a $j$-multitree, and which has local roots $s_1,...,s_j$. Moreover, $\Gamma_v$ has a distinguished local root, $s_\ell$, whose parents all lie in the set $Q \subseteq \{q_1,...,q_k\}$. Given values of $F(\Gamma_v, r, \vec{\beta})$, we wish to compute the optimal value of $F(\Gamma_u, r, \vec{\alpha})$ for appropriate values of $\vec{\alpha}$.

Observe that in the INCLUDE case, not all values of $\vec{\alpha}$ are valid as ancestry signatures of $\Gamma_u$, since local roots in $Q$ must all \emph{share} the same failure number. Thus, our recurrence will only be defined for values of  $\vec{\alpha}$ for which this is true. To describe this formally, we employ a mapping $\vec{h} : \mathbb{N}^j \to \mathbb{N}^k$ which maps ancestry signatures of $\Gamma_v$ to their corresponding signature in $\Gamma_u$. 

To define $\vec{h} : \mathbb{N}^j \to \mathbb{N}^k$ we employ a one-to-one mapping to capture which local roots of $\Gamma_v$ are also local roots of $\Gamma_u$. Recall that $\Gamma_v$ has local roots $s_1,...,s_j$ while $\Gamma_u$ has local roots $q_1,...,q_k$, and these roots are not necessarily distinct. Then there exists a one-to-one mapping $\pi : \mathbb{N} \to \mathbb{N}$ such that $q_i = s_{\pi(i)}$ for any local root $q_i \notin Q$.

The mapping $\pi$ allows us to formally define $\vec{h}(\vec{\beta})$ as follows. Let $\vec{\beta} = \langle \beta_1, ..., \beta_j \rangle$ be the ancestry signature of a placement on the leaves of $M[\Gamma_v]$.  For each such $\vec{\beta}$ there is one valid value of $\vec{h}(\vec{\beta})$, defined as 
\[\vec{h}(\vec{\beta}) = \langle h_1,...,h_k \rangle \text{ where } h_i = \begin{cases}
\beta_\ell & \text{ if $q_i \in Q$}, \\
\beta_{\pi(i)} & \text{ if $q_i \notin Q$}.
\end{cases}\]
A concrete example depicting how $\pi$ works together with the definitions of $\vec{\beta}$ and $\vec{h}(\vec{\beta})$ can be seen in \autoref{f:include-one-one-mapping}.

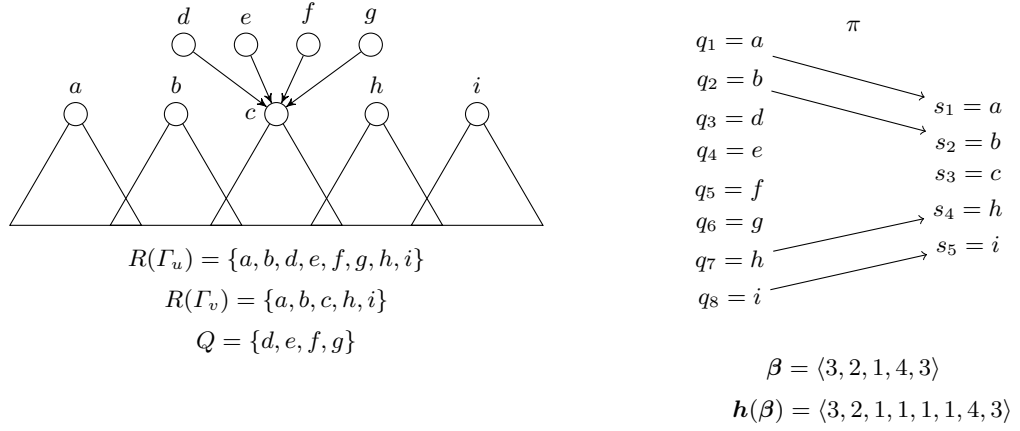
\begin{figure}[tb]
	\centering
	\begin{tikzpicture}

\node[draw, circle,fill=white, label={$a$}] (a) {};	
\node[draw, circle,fill=white, right=of a, label={$b$}] (b) {};	
\node[draw, circle,fill=white, right=of b, label=180:{$c$}] (c) {};	
\node[draw, circle,fill=white, right=of c, label={$h$}] (h) {};	
\node[draw, circle,fill=white, right=of h, label={$i$}] (i) {};	

\node[draw, circle, fill=white, above=6mm of b, label={$d$},xshift=1mm] (d) {};
\node[draw, circle, fill=white, right=5mm of d,label={$e$}] (e) {};
\node[draw, circle, fill=white, right=5mm of e,label={$f$}] (f) {};
\node[draw, circle, fill=white, right=5mm of f,label={$g$}] (g) {};

\draw[->, >=stealth'] (d) edge (c);
\draw[->, >=stealth'] (e) edge (c);
\draw[->, >=stealth'] (f) edge (c);
\draw[->, >=stealth'] (g) edge (c);

\begin{scope}[on background layer]
	\node[regular polygon, regular polygon sides=3,draw, below=-2mm of a, minimum height=20mm] {};
	\node[regular polygon, regular polygon sides=3,draw, below=-2mm of b, minimum height=20mm] {};
	\node[regular polygon, regular polygon sides=3,draw, below=-2mm of c, minimum height=20mm] {};
	\node[regular polygon, regular polygon sides=3,draw, below=-2mm of h, minimum height=20mm] {};
	\node[regular polygon, regular polygon sides=3,draw, below=-2mm of i, minimum height=20mm] {};
\end{scope}

\node[below=15mmof c] (Rgammau) {$R(\Gamma_u) = \{a,b,d,e,f,g,h,i\}$};
\node[below=0mm of Rgammau] (Rgammav) {$R(\Gamma_v) = \{a,b,c,h,i\}$};

\node[below=0mm of Rgammav] {$Q=\{d,e,f,g\}$};

\node[right=4cm of g] (q1) {$q_1 = a$};
\node[below=0cm of q1] (q2) {$q_2 = b$};
\node[below=0cm of q2] (q3) {$q_3 = d$};
\node[below=0cm of q3] (q4) {$q_4 = e$};
\node[below=0cm of q4] (q5) {$q_5 = f$};
\node[below=0cm of q5] (q6) {$q_6 = g$};
\node[below=0cm of q6] (q7) {$q_7 = h$};
\node[below=0cm of q7] (q8) {$q_8 = i$};

\node[right=8.5mm of q1, yshift=0.25cm] (pi) {$\pi$};

\node[right=2cm of q1,yshift=-0.85cm] (s1) {$s_1 = a$};
\node[below=0cm of s1] (s2) {$s_2 = b$};
\node[below=0cm of s2] (s3) {$s_3 = c$};
\node[below=0cm of s3] (s4) {$s_4 = h$};
\node[below=0cm of s4] (s5) {$s_5 = i$};

\draw (q1) edge[->] (s1);
\draw (q2) edge[->] (s2);
\draw (q7) edge[->] (s4);
\draw (q8) edge[->] (s5);

\node[below=40mm of pi, yshift=-1mm] (alpha) {$\vec{\beta} = \langle 3, 2, 1, 4, 3 \rangle$};
\node[below=0mm of alpha,xshift=2.5mm] (alpha) {$\vec{h}(\vec{\beta}) = \langle 3, 2, 1, 1, 1, 1, 4, 3 \rangle$};

\end{tikzpicture}
	\vspace{-0.5cm}
	
	\caption{The only requirement on $\pi$ is that $q_i = s_{\pi(i)}$ for any local root which is not in $R(\Gamma_u) \setminus Q$. This will occur above so long as $\pi(1) = 1,~\pi(2) = 2,~\pi(7) = 4,$ and $\pi(8) = 5$.}\label{f:include-one-one-mapping}
\end{figure}

With the mapping $\vec{h}$ in hand we can describe the optimal value of $F(\Gamma_u, r, \vec{h}(\vec{\beta}))$ by means of the recurrence
\[
F(\Gamma_u, r, \vec{h}(\vec{\beta})) =
F(\Gamma_v, r, \vec{\beta}) + |Q|\cdot\vec{e}(\beta_\ell) ~~~~~~~~~ \text{(INCLUDE where $s_\ell$ has parents in $Q$)}
\]
where the term $|Q|\cdot\vec{e}(\beta_\ell)$ corrects for the addition of all $|Q|$ local roots in $Q$. Each such local root will have a failure number matching that of $s_\ell$.

In case an ancestry signature $\vec{\alpha}'$ is not in the image of $\vec{h}$, the value of $F(\Gamma_u, r, \vec{\alpha}')$ remains $\infty$.

\end{document}